\newcommand{\var}{\operatorname{Var}}
\newcommand{\eps}{\varepsilon}
\renewcommand{\epsilon}{\varepsilon}
\newcommand{\dd}{\mathrm{d}}
\newcommand{\prob}{\operatorname{\mathbb P}}
\newcommand{\esp}{\mathbb{E}}
\newcommand{\R}{\mathbb{R}}
\newcommand{\piamis}{\widehat{\Pi}^\text{AMIS}}
\newcommand{\piast}{\widehat{\Pi}^\ast}
\newtheorem{theo}{Theorem}
\newtheorem{lem}[theo]{Lemma}
\newtheorem{pro}[theo]{Proposition}
\newtheorem{defi}[theo]{Definition} 
\theoremstyle{definition}
\title{Consistency of the Adaptive Multiple Importance Sampling}
\author{
Jean-Michel Marin$^{1}$, Pierre Pudlo$^{1,\, 2}$ and
Mohammed Sedki$^{1,\, 3}$   \\
{\small $^1$ Universit\'e Montpellier 2, I3M UMR CNRS 5149,
  France }\\
  {\small $^2$ INRA, CBGP UMR 1062, Montferrier-sur-Lez, France.}\\
  {\small $^3$ Université Paris-Sud, CESP Inserm U1018, Paris, France}
}
\date{May 23$^\text{th}$, 2014}
\begin{document}

\maketitle

\begin{abstract}
  Among Monte Carlo techniques, the importance sampling requires fine
  tuning of a proposal distribution, which is now fluently resolved
  through iterative schemes. 
  The Adaptive Multiple Importance Sampling (AMIS) of \cite{cornuet:2012}
  provides a significant improvement in stability and effective sample size
  due to the introduction of a recycling
  procedure. 
  However, the consistency of the AMIS estimator remains largely open.
  In this work we prove the convergence of the AMIS, at a cost of a
  slight modification in the learning process.  Contrary to
  \cite{douc:2007:a}, results are obtained here in the asymptotic
  regime where the number of iterations is going to infinity while the
  number of drawings per iteration is a fixed, but growing sequence of
  integers. Hence some of the
  results shed new light on adaptive population Monte
  Carlo algorithms in that last regime. 
\end{abstract}

{
\small \textit{2010 Mathematics Subject Classification}.
Primary 65C05; secondary 60F17.
}

{ \small \textit{Key words and phrases}. Monte Carlo methods,
  importance sampling, sequential Monte Carlo, population Monte Carlo,
  adaptive algorithms, triangular array.  }

\section{Introduction}
The aim of Monte Carlo techniques is to approximate a target distribution
$\Pi(\cdot)$ on some space $\mathscr X$ with a weighted sample. Namely, they
output a system of particles $X_e\in\mathscr X$ (indexed by $e\in E$), with
their weights $\omega_e\in [0;\infty)$. Then, the discrete measure
\[
\widehat\Pi_E(\cdot)  = \underset{e \in E}{\sum} \omega_e \delta_{X_e}(\cdot)
\]
serves as an approximation of the target $\Pi(\cdot)$. And the Monte
Carlo scheme is said to be consistent if, for a large class of
functions $\psi:\mathscr X\to\mathbb R$, the sum $\int \psi(x)
\widehat\Pi_E(\text{d}x)=\sum_e \omega_e \psi(x_e)$ tends to the
integral $\Pi(\psi)=\int \psi(x) \Pi(\text{d} x)$ when the sample
size (\textit{i.e.}, the cardinality of the index set $E$) tends to
infinity.  Let us assume that the target $\Pi$ has a density
$\pi(\cdot)$ with respect to a reference measure $\mathrm{d}x$.
Among the Monte Carlo methods, Importance Sampling
\citep[see][]{hesterberg:1988, hesterberg:1995, ripley} consists in
drawing $X_e$ from a proposal probability $Q$ with density
$q(\cdot)$ and then in weighting this draw with $\omega_e =
\pi(X_e) / q(X_e)$ to take into account the discordance between
the sampling distribution and the target. The precision of the
importance sampling approximation becomes very unsatisfactory when the
importance weight $\omega_e$ has high variance, see
\citet{owen:2000}. Such problematic cases arise when the proposal
distribution $Q$ is not well adapted to the target $\Pi$ or when the
space $\mathscr X$ is of high dimensionality. Practically, the fine
tuning of the proposal measure for a given target is
difficult. However, there are adaptive techniques to learn the
proposal probability sequentially \citep[see][]{liu:book,
  rubinstein:book, pennanen:chapter}.  Among these adaptive methods,
the PMC scheme of \citet{cappe:2004}, generalised in the D-kernel
paradigm \citep[see ][]{douc:2007:a, douc:2007:b, cappe:2008}, aims at
building a mixture distribution by minimising some optimality
criterion (Kullback-Leibler, variance,\ldots).  Given a parametric
family of distributions $Q(\theta)$ indexed by $\theta \in \Theta$,
\citet{cappe:2008} seek the optimal proposal $Q(\theta^\ast)$ for a
given target $\Pi$ by estimating $\theta^\ast$ sequentially on
successive samples.

\bigskip

In many real problems where computing $\pi(X_e)$ (hence the importance
weight) is time consuming, recycling the successive samples generated during
the learning process is essential.  To this end, \cite{cornuet:2012} introduce
the Adaptive Multiple Importance Sampling (AMIS), combining multiple
importance sampling methods and adaptive techniques.  The AMIS is a
sequential scheme in the same vein as \cite{cappe:2008}. During the
learning process, the AMIS tries successive proposal
distributions, say
$Q\big(\widehat{\theta}_1\big),
\ldots,Q\big(\widehat{\theta}_T\big)$.
Each stage of the iterative process estimates a better proposal
$Q\big(\widehat{\theta}_{t+1}\big)$, by minimising a criterion such
as, for instance, the Kullback-Leibler divergence between $Q(\theta)$ and
$\Pi$.  But the novelty of the AMIS is the following recycling
procedure of all past simulations.  At iteration $t$, the AMIS has
already produced 
$t$ samples:\par
\begin{align*}
X_1^1,\ldots, X_{N_1}^1 & \sim Q\big(\widehat{\theta}_1\big),
\\
X_1^2,\ldots, X_{N_2}^2 & \sim Q\big(\widehat{\theta}_2\big),
\\
& \vdots
\\
X_1^t,\ldots, X_{N_t}^t & \sim Q\big(\widehat{\theta}_t\big)
\end{align*}
with respective sizes $N_1,N_2,\ldots, N_t$. Then the scheme derives a
new parameter $\widehat{\theta}_{t+1}$ from all those past
simulations. To that purpose, the weight of $X_i^k$ $(k \le t, i \le
N_k)$ is updated with
\begin{equation}
\label{eq:weight:t}
  \pi\big(X_i^k\big) \Big/\left[\sum_{\ell = 1}^{t} 
\frac{N_\ell}{\Omega_{t}} q\big( X_i^k,
  \widehat{\theta}_\ell\big)\right],
\end{equation}
where $\widehat{\theta}_1, \ldots, \widehat{\theta}_t$ are the
parameters generated throughout the $t$ past iterations, $x\mapsto
q(x,\theta)$ is the density of $Q(\theta)$ with respect to the
reference measure $\dd x$ and $\Omega_t = N_1 + N_2 + \cdots + N_t$
is the total number of past particles.  The importance weight
\eqref{eq:weight:t} is inspired by the techniques
of 
\cite{veach:1995}. \citet {owen:2000} popularise those techniques to
merge several independent importance samples and even propose a more
refined and stabilising alternative, named \textit{deterministic
  multiple mixture}.  On various numerical experiments where the
target is the posterior distribution of some population genetics data
sets, \cite{cornuet:2012} and \cite{siren:2010} show considerable
improvements of the AMIS in Effective Sampling Size \citep[denoted
further ESS, see][chapter~2]{liu:book}.  In such settings where
calculating $\pi(X_i^k)$ is drastically time consuming, a
recycling process makes sense. However, no proof of
convergence had yet been provided in \cite{cornuet:2012}. It is worth
noting that the weight \eqref{eq:weight:t} introduces long memory
dependence between the samples, and even a bias which was not
controlled by theoretical results. The main purpose of this paper is
to fill in this gap, and to prove the consistency of the algorithm at
the cost of a slight modification in the adaptive process. We suggest
learning the new parameter $\widehat{\theta}_{t+1}$ on the last
sample $X^t_1, \ldots, X^t_{N_t}$ weighted with the classical
$\pi\big(X^t_i\big) \Big/ q\Big(X^t_i, \widehat{\theta}_t\Big)$
for all $i = 1,\ldots, N_t$.  The only recycling procedure is in the
final output that merges all the previously generated samples using
\eqref{eq:weight:t}.

\bigskip

In \cite{douc:2007:a} for instance, the consistency of the adaptive
population Monte Carlo schemes is proven assuming that the number of
iterations, say $T$, is fixed and that the number of simulations
within each iteration, $N=N_1=N_2=\cdots= N_T$, goes to infinity.  We
decided to adopt a more realistic asymptotic setting in this
paper. Contrary to these last results, the convergence of
Theorem~\ref{thm:Consistency} holds when $N_1,\ldots, N_T$ is a
growing, but fixed sequence and $T$ goes to infinity. Hence the proofs
of Theorem~\ref{thm:strong} provide new insights on adaptive PMC in
that last asymptotic regime.  The convergence of $\widehat{\theta}_t$
to the target $\theta^\ast$ relies on a weak law of large numbers on
triangular arrays \citep[see][Chapter~9]{chopin:2004, douc:2008,
  cappe:book}, and a clever application of the Chebyshev inequality to
obtain the almost sure consistency. 
The consistency of the final merging with weights given by
\eqref{eq:weight:t} is not a straightforward consequence of asymptotic
theorems. Its proof requires the introduction of a new weighting 
\begin{equation}
  \label{eq:wstar}
  \pi(X_i^k)\bigg/ q(X_i^k,\theta^\ast)
\end{equation}
that is more simple to study, although biased and non explicitly
computable (because $\theta^\ast$ is unknown). Under the set of
assumptions given below, this last weighting scheme is consistent (see
Proposition~\ref{pro:auxiliary}) and is comparable to the actual weighting
given by \eqref{eq:weight:t}, which yields the consistency proven in
Theorem~\ref{thm:Consistency}.

\bigskip

The paper is organised as follows. The modification of the original
AMIS is detailed in Section~\ref{sec:algorithms}. The main results
are in Section~\ref{sec:results}.  The proofs are in Sections
\ref{sec:proofs} and \ref{sec:proofs2}. At last, in
Section~\ref{sec:numerical}, we compare the performance of our new
algorithm against the original AMIS and a scheme with a naive recycling
strategy. 

\section{Modifications of the AMIS scheme}
\label{sec:algorithms}

\begin{algorithm}[h]
  \caption{Modified AMIS}
  \begin{algorithmic}[1]
    \Require an initial parameter
    $\widehat{\theta}_1$ and increasing sample sizes $N_1, \ldots, N_T$.
    \For {$t = 1 \to T$} \label{line:begin.learn}
    \For {$i = 1 \to N_t$}
    \State \textbf{draw} $X^t_i$ from $Q (\widehat{\theta}_t)$
    \State \textbf{compute} \label{line:simple.weight}
    $\omega^t_i = \pi(X_i^t)\big/ q(X_i^t, \widehat \theta_t)$.
    \EndFor
    \State \textbf{compute} \label{line:new.theta}
    $\widehat \theta_{t+1} = N_t^{-1}\sum_{i=1}^{N_t} \omega_i^t h(X_i^t)$
    \EndFor \label{line:end.learn}
    \State \textbf{set} $\Omega_T =N_1+\cdots + N_T$
    \For {$t = 1 \to T$} \label{line:begin.recycle}
    \For {$i = 1 \to N_t$}
    \State \textbf{update} \label{line:updated.weight}
    $\omega_i^t= \pi(X_i^t)\Big/ \Omega_T^{-1}\sum_{k=1}^T N_k
    q(X_i^t, \widehat\theta_k)$
    \EndFor
    \EndFor \label{line:end.recycle}
    \State \Return 
    $ \big(X^1_1,\omega^1_1\big), \ldots, \big(X^1_{N_1},
      \omega^1_{N_1}\big), \ldots, \big(X^T_1,
      \omega^T_1\big), \ldots, \big(X^T_{N_T}, \omega_{N_T}^T\big)$
  \end{algorithmic}
\end{algorithm}

When compared with the recursive algorithm of \citet{cappe:2008}, the
novelty in the AMIS of \cite{cornuet:2012} lies in the recycling
process at each iteration and in the final system which includes all
particles generated during the $T$ iterations of the adaptive
process. Hence, the AMIS estimation of the integral $\Pi(\psi) = \int
\psi(x) \Pi(\dd x)$ is
\begin{equation}
\label{eq:amis:estimator}
\widehat{\Pi}_T^{\text{AMIS}}(\psi) = \frac{1}{\Omega_T} \sum_{t = 1}^T
\sum_{i = 1}^{N_t} \left[ \frac{\pi(X_i^t)}{\Omega_T^{-1}\sum_{k=1}^T N_k q(X_i^t, \widehat\theta_k)} \right] \psi(X_i^t),  
\end{equation}
based on the weights given in~\eqref{eq:weight:t}.

The scheme we propose relies on the same ideas: we fit the proposals
assuming that the optimal proposal is at $\theta=\theta^\ast$,
whatever the criterion we believe in (Kullback-Leibler divergence,
variance criteria, moment fits,\ldots). The only condition is that we
are able to write the optimal parameter as $\theta^\ast=\int
h(x)\Pi(\dd x)$, where $h$ is an explicitly known function, so that
\begin{equation}
\widehat\theta_{t+1}=\frac{1}{N_t}\sum_{i=1}^{N_t}
\frac{\pi(X^t_i)}{q(X^t_i,\widehat\theta_t)} h(X^t_i).\label{eq:thetat}
\end{equation}
Our
process ends with the recycling of all past particles by updating all
weights with \eqref{eq:weight:t}.  But, contrary to
\cite{cornuet:2012}, the calibration of the new parameter
$\theta_{t+1}$ only considers the current sample $X_1^t, \ldots,
X^t_{N_t}$.  We hope improvements in the accuracy of
$\theta_{t+1}$ against previous estimations by requiring that the
sample size $N_t$ grows at each iteration. As a consequence, the
influence of the first estimations $\widehat{\theta}_1,\ldots,
\widehat{\theta}_{t-1}$ decreases significantly in the mixture of the
denominator of the importance weight \eqref{eq:weight:t}.\\

The modified AMIS is given in Algorithm~1. The learning process
between lines~\ref{line:begin.learn} and \ref{line:end.learn} draws a
sequence of samples from which it calibrates gradually the parameter
$\theta$. The new value of the proposal's parameter we compute at
line~\ref{line:new.theta} depends only on the last sample we have
drawn. This is the only discrepancy from the original algorithm of
\citet{cornuet:2012}.  Here, the recycling process producing the final
output is silently done by updating the weights of all produced
samples between lines~\ref{line:begin.recycle} and
\ref{line:end.recycle}. Finally, we should note that, if calculating
$\pi(x)$ is time consuming, the value computed at
line~\ref{line:simple.weight} should be stored in memory to perform
the update at line~\ref{line:updated.weight} during the recycling
process.

\section{Consistency results}
\label{sec:results}

We state here our main results (on the learning process in
Paragraph~\ref{sub:learning}, and on the final output in
Paragraph~\ref{sub:final}). For the sake of clarity, their proofs are
postponed to Sections~\ref{sec:proofs} and \ref{sec:proofs2}. We begin
with some hypothesis on the parametric family of proposals.

\subsection{Assumptions on the proposals}

We assume that the space $\Theta$ is a subset of the space $\R^d$
endowed with the Euclidian norm $\|\cdot\|$. The set $\mathscr X$ is a
subset of a finite-dimensional vector space, equiped with a reference
measure $\dd x$. All $Q(\theta)$ for $\theta\in \Theta$ and $\Pi$ are
absolutely continuous with respect to the reference measure. They have
densities $q(x,\theta)$ and $\pi(x)$ respectively. The minimal
hypothesis for importance sampling schemes to provide consistent
estimates is that $\Pi$ is absolutely continuous with respect to all
proposals: $\forall \theta\in \Theta$, $\Pi \ll Q(\theta)$. Hence we
assume that $q(x,\theta)=0$ implies $\pi(x)=0$.

\bigskip

Without loss of generality, we might assume that $\mathscr X$ and
$\Theta$ are both open subsets of Euclidian spaces, and that both
$\pi(x)$ and $q(x,\theta)$ are positive for all $x\in\mathscr X$,
$\theta\in \Theta$. Besides we can note that $\widehat\theta_{t+1}$ is
defined in \eqref{eq:thetat} as a linear combination of (random)
values of $h$, and the only fact we can safely affirm on the
coefficients of this combination is that they are positive. Therefore,
for the algorithm not to stop, any positive linear combination of
elements of $\Theta$ should fall into $\Theta$. In particular, this
implies that $\Theta$ cannot be a bounded subset of a Euclidian space.

\bigskip

We also impose some regularity conditions on the family of proposals
$\{Q(\theta)\}_{\theta \in \Theta}$ which will ensure consistency of
our procedure. For all $x \in \mathscr X$, $\theta \mapsto q(x,
\theta)$ is continuous on $\Theta$, and the joint function $(x,
\theta) \mapsto q(x, \theta)$ is lower semicontinuous on $\mathscr X
\times \Theta$. Moreover, when $\theta\to\theta^\ast$, $q(\cdot,
\theta)$ converges to $q(\cdot, \theta^\ast)$ uniformly over compact
sets, \textit{i.e.}, for any compact subset $K$ of $\mathscr X$,
\[
\| q(\cdot, \theta) - q(\cdot, \theta^\ast) \|_{K,\infty} := 
\sup\big\{ |q(x, \theta) - q(x, \theta^\ast)|\, :\ x\in K \big\}
\]
converges to $0$.

\subsection{Consistency of the learning process}
\label{sub:learning}
We focus here on the learnt $\widehat{\theta}_t$ defined in
\eqref{eq:thetat} and show convergence to the optimal parameter
$\theta^\ast=\int h(x)\pi(x)\dd x$. Sadly, the AMIS weight of a
particle, see \eqref{eq:amis:estimator}, is an average over the path
in the parameter space being taken during the sequential
algorithm. Weak consistency, \textit{i.e.}, convergence in
probability, is not enough to control such averages, as there exists
no Cesàro Lemma for the convergence in probability, see for instance
\citet{Billingsley}, exercise 20.23 p. 272. Thus, we decided to rely on
almost sure convergence. The challenge is to prove that the sequential
algorithm do not accumulate Monte Carlo errors over iterations.
Let us introduce the following class of functions. 
\begin{defi}\label{defi:G2}
  A function $\psi:\mathscr X\to\mathbb R^d$ belongs to the class
  $\mathscr G^2(\mathbb R^d)$ if and only if $\int
  \pi^2(x)\|\psi(x)\|^2/q(x,\theta)\dd x $ is finite for all $\theta$
  in $\Theta$ and depends continuously on $\theta$.
\end{defi}
We can interpret the integrability condition in the above definition
as follow: the classical importance Monte Carlo algorithm that
estimate $\Pi(\psi)$ has finite variance whatever the importance
distribution in the parametric family.
With such assumptions on the function $h$ used to learn the optimal
parameter, and a condition on the sample sizes, we can show the
following result.
\begin{theo}
  \label{thm:strong}
  If $h\in \mathscr G^2(\mathbb R^d)$, the estimate
  $\widehat{\theta}_t$ tends to $\theta^\ast$ in probability when
  $t\to\infty$.  If, additionally, $\sum_t 1/N_t<\infty$, then
  $\widehat{\theta}_t\to \theta^\ast$ almost surely.
\end{theo}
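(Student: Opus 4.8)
The plan is to exploit the exact conditional unbiasedness of the update \eqref{eq:thetat}. Let $\mathcal F_t$ denote the $\sigma$-field generated by the whole history up to and including the computation of $\widehat\theta_t$, so that, conditionally on $\mathcal F_t$, the draws $X_1^t,\dots,X_{N_t}^t$ are independent with common law $Q(\widehat\theta_t)$. First I would record the two computations that drive everything. Writing $W_i^t = \pi(X_i^t)h(X_i^t)/q(X_i^t,\widehat\theta_t)$, the importance-weight identity gives $\esp[W_i^t\mid\mathcal F_t]=\int h(x)\pi(x)\,\dd x=\theta^\ast$, hence
\[
\esp[\widehat\theta_{t+1}\mid\mathcal F_t]=\theta^\ast
\]
for every $t$, so each new estimate is conditionally unbiased \emph{whatever} the current $\widehat\theta_t$. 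Second, since the $W_i^t$ are conditionally independent and identically distributed,
\[
\esp\big[\|\widehat\theta_{t+1}-\theta^\ast\|^2\mid\mathcal F_t\big]\le \frac{1}{N_t}\,\esp\big[\|W_1^t\|^2\mid\mathcal F_t\big]=\frac{V(\widehat\theta_t)}{N_t},
\]
where $V(\theta)=\int \pi^2(x)\|h(x)\|^2/q(x,\theta)\,\dd x$, a quantity that is finite and continuous in $\theta$ precisely because $h\in\mathscr G^2(\R^d)$.

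For the convergence in probability I would run a weak law of large numbers on the triangular array $\{W_i^t\}$, row $t$ having $N_t$ terms with $N_t\to\infty$: conditionally on $\mathcal F_t$ the row is i.i.d.\ with mean $\theta^\ast$ and, by the display above, second moment controlled by $V(\widehat\theta_t)/N_t$, so $\widehat\theta_{t+1}\to\theta^\ast$ in conditional probability. Integrating out with the conditional Chebyshev bound gives $\prob(\|\widehat\theta_{t+1}-\theta^\ast\|>\eps)\le \esp[\min(1,V(\widehat\theta_t)/(N_t\eps^2))]$, and it remains to see this vanishes. Since $N_t\to\infty$, it suffices that $V(\widehat\theta_t)$ be bounded in probability, which I would obtain from the continuity of $V$ on compact sets together with tightness of $\widehat\theta_t$. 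This tightness — equivalently, the statement that the sequential scheme does not let the Monte Carlo errors accumulate and push $\widehat\theta_t$ out to regions where $V$ blows up — is the main obstacle; the conditional unbiasedness $\esp[\widehat\theta_{t+1}\mid\mathcal F_t]=\theta^\ast$ is the crucial lever, since it forces a deterministic drift back towards $\theta^\ast$ at every step irrespective of the past.

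For the almost sure statement under $\sum_t 1/N_t<\infty$ I would localise and apply Borel--Cantelli (the ``clever Chebyshev''). Fix the closed ball $B=\{\theta:\|\theta-\theta^\ast\|\le 1\}$ and set $M=\sup_{\theta\in B}V(\theta)<\infty$ by continuity. For a fixed $\eps\in(0,1)$ the variance bound, after conditioning on $\mathcal F_t$ and taking expectations, yields
\[
\sum_t \prob\big(\|\widehat\theta_{t+1}-\theta^\ast\|>\eps,\ \widehat\theta_t\in B\big)\le \frac{M}{\eps^2}\sum_t\frac{1}{N_t}<\infty,
\]
so almost surely there is a $T_0$ beyond which $\widehat\theta_t\in B$ forces $\|\widehat\theta_{t+1}-\theta^\ast\|\le\eps$, and since $\eps<1$ this in turn forces $\widehat\theta_{t+1}\in B$; thus membership in $B$ is self-propagating past $T_0$. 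The convergence in probability already established furnishes a subsequence along which $\widehat\theta_t\to\theta^\ast$ almost surely, so almost surely $\widehat\theta_t$ enters $B$ at some time beyond $T_0$, after which it remains in $B$ with $\|\widehat\theta_s-\theta^\ast\|\le\eps$ for all later $s$. Intersecting these full-probability events over $\eps=1/m$, $m\ge1$, gives $\widehat\theta_t\to\theta^\ast$ almost surely, which completes the argument.
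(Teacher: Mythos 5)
The genuine gap is in your first half. You correctly isolate tightness of $\{\widehat\theta_t\}$ as the missing ingredient for the convergence in probability, but you never prove it: you only assert that the conditional unbiasedness $\esp[\widehat\theta_{t+1}\mid\mathcal F_t]=\theta^\ast$ is ``the crucial lever'' forcing a drift back towards $\theta^\ast$. Unbiasedness alone does not imply tightness: a sequence of random vectors can have constant mean $\theta^\ast$ while its mass escapes to infinity (take $\theta^\ast+t\,\xi_t u$ with $\xi_t=\pm1$ symmetric and $u$ a unit vector). What actually closes this step --- and it is exactly how the paper does it --- is the same importance-sampling identity applied to $\|h\|$ instead of $h$: conditionally on $\mathcal F_t$,
\[
\esp\big[\|\widehat\theta_{t+1}\|\,\big|\,\mathcal F_t\big]\le\frac{1}{N_t}\sum_{i=1}^{N_t}\esp\left[\frac{\pi(X_i^t)}{q(X_i^t,\widehat\theta_t)}\,\big\|h(X_i^t)\big\|\,\Big|\,\mathcal F_t\right]=\int \pi(x)\,\|h(x)\|\,\dd x=\Pi(\|h\|),
\]
a bound uniform in $t$ and in the whole past. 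The constant $\Pi(\|h\|)$ is finite because $h\in\mathscr G^2(\R^d)$, by Cauchy--Schwarz: $\int\pi\|h\|\,\dd x\le\big(\int\pi^2\|h\|^2/q(\cdot,\theta)\,\dd x\big)^{1/2}\big(\int q(x,\theta)\,\dd x\big)^{1/2}$. Markov's inequality then gives $\sup_t\prob(\|\widehat\theta_t\|>c)\le\Pi(\|h\|)/c\to0$, which is the tightness you need. With that supplied, your route (conditional Chebyshev plus boundedness in probability of $V(\widehat\theta_t)$ via continuity on compacts) does yield $\widehat\theta_t\to\theta^\ast$ in probability, and is in fact more elementary than the paper's appeal to the triangular-array weak law of Theorem~\ref{thm:Cappe:1}; but as written, the proposal leaves its self-declared ``main obstacle'' open.

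Your second half, the almost sure convergence, is correct and is a genuine variant of the paper's argument. Both rest on the same conditional Chebyshev bound $K/(\eps^2 N_t)$, valid when $\widehat\theta_t$ lies in a fixed compact ball around $\theta^\ast$ where the variance functional is bounded, and both use the convergence in probability to guarantee entry into that ball. The paper then invokes the Markov property of $\{\widehat\theta_t\}$ and bounds $\prob\big(\bigcap_{t\ge T}\{\|\widehat\theta_{t+1}-\theta^\ast\|\le\eps\}\big)$ from below by an infinite product $\prod_t\big(1-K_\eps/(\eps^2N_t)\big)$, convergent because $\sum_t 1/N_t<\infty$; your Borel--Cantelli formulation with the self-propagation of ball membership (taking $\eps$ smaller than the ball's radius) reaches the same conclusion without chaining conditional probabilities, which is arguably cleaner. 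One small repair: the closed ball of radius $1$ need not be contained in $\Theta$, which is only assumed open, so take instead $\bar B(\theta^\ast,r)\subset\Theta$ and restrict to $\eps<r$; the paper's own ball $\bar{\mathcal B}(\theta^\ast,\eps)$ carries the same implicit restriction.
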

The proof, which is in Section~\ref{sec:proofs}, consists of two parts. In the
first part, we prove the convergence in probability using the weak law of
large numbers on triangular arrays given in the Appendix.
Afterwards, since $h\in\mathscr G^2(\mathbb R^d)$, conditionally on
$\widehat\theta_t$, the next estimate $\widehat\theta_{t+1}$ is the
sum of an iid and square integrable sample of size $N_t$. Thus, using the
Chebyshev inequality artfully, and the weak law of large number,
we obtain the almost sure convergence.

\subsection{Consistency of the final recycling scheme}
\label{sub:final}
The remaining part of our results deals with the final output 
that merges all the samples. More precisely, Theorem~\ref{thm:Consistency}
below says that the empirical sum of a function $\psi$ on the merged, re-weighted
sample provides a consistent approximation of the integral $\Pi(\psi)$. The
class of integrands $\psi\in\mathbb L^1(\Pi)$ for which the above holds is
determined by the following class of functions.
\begin{defi}\label{defi:H2}
  A function $\psi:\mathscr X\to \mathbb R$ belongs to the class
  $\mathscr H^2(\mathbb R)$ if and only if the integral $ \int
  \big[\pi(x) \psi(x)\big/ q\big(x, \theta^\ast\big)\big]^2 q\big(x,
  \theta\big) \dd x $ is finite for all $\theta\in \Theta$ and is a
  function of $\theta$ that is continuous at
  $\theta=\theta^\ast$.
\end{defi}
Likewise, the above class of functions might be interpreted in terms
of quadratic moments. Note that, if $\psi$ is in $\mathscr H^2(\mathbb R)$, then $\psi$
is in $L^1(\Pi)$. And finally, set
\begin{equation} \label{eq:mesp}
m_\eps(x) := \inf\{q(x, \theta)\,:\ \theta\in \bar B(\theta^\ast,\eps) \}
\end{equation}
where the infimum is actually attained because $\theta\mapsto
q(x,\theta)$ is continuous, therefore positive on $\mathscr X$.  We are now
in a position to state the following strong consistency.
\begin{theo} 
  \label{thm:Consistency}
  Assume that $h\in \mathscr G^2(\mathbb R^d)$ and $\sum_t
  1/N_t<\infty$. Moreover, assume that, for some $\eps > 0$,
  $\psi(\cdot) q(\cdot,\theta^\ast)/m_{\eps}(\cdot)$ is in $\mathscr
  H^2(\mathbb R)$.  Then, the sum over the
  final weighted sample $\widehat{\Pi}_T^{\text{AMIS}}(\psi)$ given
  in \eqref{eq:amis:estimator} tends almost surely to $\int \psi(x)
  \pi(x) \dd x$ when $T\to\infty$.
\end{theo}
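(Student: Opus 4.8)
The plan is to compare the genuine AMIS estimator \eqref{eq:amis:estimator} with the auxiliary estimator built from the idealised weights \eqref{eq:wstar},
\[
\piast_T(\psi) := \frac{1}{\Omega_T}\sum_{t=1}^T\sum_{i=1}^{N_t}\frac{\pi(X_i^t)}{q(X_i^t,\theta^\ast)}\,\psi(X_i^t).
\]
Since $m_\eps(x)\le q(x,\theta^\ast)$ forces the hypothesis on $\psi\,q(\cdot,\theta^\ast)/m_\eps$ to drag $\psi$ itself into the class $\mathscr H^2(\R)$, Proposition~\ref{pro:auxiliary} yields $\piast_T(\psi)\to\Pi(\psi)$ almost surely. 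It therefore suffices to prove that the remainder
\[
R_T := \piamis_T(\psi)-\piast_T(\psi) = \frac{1}{\Omega_T}\sum_{t=1}^T\sum_{i=1}^{N_t}\frac{\pi(X_i^t)\psi(X_i^t)}{q(X_i^t,\theta^\ast)}\left[\frac{q(X_i^t,\theta^\ast)}{\bar q_T(X_i^t)}-1\right]
\]
tends to $0$ almost surely, where $\bar q_T(x):=\Omega_T^{-1}\sum_{k=1}^T N_k\,q(x,\widehat\theta_k)$ is the mixture density in the denominator of \eqref{eq:amis:estimator}.

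The first step is to control $\bar q_T$, using the almost sure convergence $\widehat\theta_k\to\theta^\ast$ provided by Theorem~\ref{thm:strong} (legitimate here, since $h\in\mathscr G^2(\R^d)$ and $\sum_t 1/N_t<\infty$). Fix the $\eps>0$ of the statement. On the almost sure event $\{\widehat\theta_k\to\theta^\ast\}$ there is a random index $K_0$ with $\widehat\theta_k\in\bar B(\theta^\ast,\eps)$ for all $k\ge K_0$, whence $q(x,\widehat\theta_k)\ge m_\eps(x)$ by \eqref{eq:mesp}. Because the sizes $N_k$ increase and $\Omega_T\to\infty$, the cumulated weight $\Omega_{K_0-1}/\Omega_T$ of the finitely many early terms vanishes, so that for $T$ large $\bar q_T(x)\ge\tfrac12 m_\eps(x)$ uniformly in $x$. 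Moreover the convex weights $N_k/\Omega_T$ satisfy the Toeplitz conditions ($N_k/\Omega_T\to0$ for each fixed $k$ and $\sum_{k\le T}N_k/\Omega_T=1$), so $\bar q_T(x)\to q(x,\theta^\ast)$ for every $x$; over any compact $K$ the convergence is uniform, by the assumed uniform convergence of $q(\cdot,\theta)$ on compacts as $\theta\to\theta^\ast$.

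It remains to push these facts through the sum. The lower bound on $\bar q_T$ gives $|q(x,\theta^\ast)/\bar q_T(x)-1|\le 2q(x,\theta^\ast)/m_\eps(x)+1$ for $T$ large, so $|R_T|\le\piast_T(\Phi)$ with the nonnegative dominating integrand $\Phi:=|\psi|\,(2\,q(\cdot,\theta^\ast)/m_\eps+1)$; a short computation using $(a+b)^2\le2a^2+2b^2$ and $q(\cdot,\theta^\ast)/m_\eps\ge1$ shows that the hypothesis $\psi\,q(\cdot,\theta^\ast)/m_\eps\in\mathscr H^2(\R)$ forces $\Phi\in\mathscr H^2(\R)$, and likewise $\Pi(\Phi)<\infty$ (Cauchy--Schwarz against $q(\cdot,\theta^\ast)$, together with $\psi\in L^1(\Pi)$). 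I would then split $\mathscr X$ into a large compact $K$ and its complement. On $K$ the factor $q(\cdot,\theta^\ast)$ is bounded below by a positive constant (continuity and positivity), so the uniform convergence of $\bar q_T$ makes $\sup_{x\in K}|q(x,\theta^\ast)/\bar q_T(x)-1|\to0$, while $\piast_T(|\psi|\mathbf 1_K)\to\Pi(|\psi|\mathbf 1_K)<\infty$ by Proposition~\ref{pro:auxiliary}; hence the contribution of $K$ to $|R_T|$ is $o(1)$. On $K^c$ the bracket is merely bounded by the domination, but $\piast_T(\Phi\mathbf 1_{K^c})\to\int_{K^c}\Phi(x)\pi(x)\,\dd x$, again by Proposition~\ref{pro:auxiliary}, and this integral is made arbitrarily small by enlarging $K$, using $\Pi(\Phi)<\infty$. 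Letting $K\uparrow\mathscr X$ after $T\to\infty$ gives $R_T\to0$ almost surely, and the theorem follows.

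The main obstacle is precisely that $\bar q_T$ depends on $T$ and on the whole simulated history, so $\piamis_T(\psi)$ is \emph{not} the auxiliary estimator applied to any fixed integrand, and Proposition~\ref{pro:auxiliary} cannot be invoked directly on $R_T$. The crux is thus the compact/tail splitting: reconciling the uniform control $\bar q_T\to q(\cdot,\theta^\ast)$, available only on compacts, with a single tail domination valid on all of $\mathscr X$, the latter being exactly what the somewhat technical role of $m_\eps$ in the hypothesis on $\psi$ supplies.
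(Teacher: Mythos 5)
Your proposal is correct and follows essentially the same route as the paper: you compare $\piamis_T(\psi)$ with the auxiliary estimator $\piast_T(\psi)$ (Proposition~\ref{pro:auxiliary}), control the mixture denominator $D_T$ via the Ces\`aro/uniform-convergence argument of Lemma~\ref{lem:D} together with the lower bound $D_T \ge \frac{\Omega_T-\Omega_\eps}{\Omega_T} m_\eps$, and conclude by the same compact/tail splitting with the $|\psi| q(\cdot,\theta^\ast)/m_\eps$ domination used in Proposition~\ref{pro:AMIS}. The only differences are cosmetic (you merge the paper's two tail terms into the single dominating integrand $\Phi$, and invoke lower semicontinuity rather than continuity to bound $q(\cdot,\theta^\ast)$ below on the compact).
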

The function $q(\cdot,\theta^\ast)/m_{\eps}(\cdot)$ is larger than $1$
on $\mathscr X$, and goes to $1$ as $\eps\to 0$. Hence the assumption
that $\psi(\cdot) q(\cdot,\theta^\ast)/m_{\eps}(\cdot)$ is in $\mathscr
  H^2(\mathbb R)$ is a bit stronger than just $\psi$ is in $\mathscr
  H^2(\mathbb R)$.

\section{Proofs of the convergence during the learning process}
\label{sec:proofs}

This Section is devoted to the proof of Theorem
\ref{thm:strong}. We first collect useful lemmas on the class
$\mathscr G^2(\mathbb R^2)$, then rely on the weak law of large number
of triangular arrays to obtain the convergence in
probability. Finally, we prove the almost convergence of the learnt parameters.

\subsection{Technical results on the functions of class $\mathscr
  G^2(\mathbb R^d)$}

The following lemma deals with the continuity condition imposed in
$\mathscr G^2(\mathbb R^d)$. With this result, it becomes obvious
that, if some function $\psi$ belongs to $\mathscr G^2(\mathbb R^d)$,
and if $\varphi$ is another function such that
$\|\varphi(x)\|\le\|\psi(x)\|$ for all $x\in \mathscr X$, then
$\varphi$ is also in $\mathscr G^2(\mathbb R^d)$.
\begin{lem}\label{lem:G2}
  Assume that, for any $\theta \in \Theta$, the integral  
  \[
  v_\psi(\theta) := \int
  {\pi^2(x) \|\psi(x)\|^2}/{q\big(x,\theta\big)} \dd x
  \]
  is finite. Fix $\theta\in \Theta$. These conditions are equivalent:
 {\em (i)}  the function $v_\psi$ is continuous at $\theta$; and
  {\em(ii)}  when $\theta'\to\theta$, 
    \[
    \int \pi^2(x)\|\psi(x)\|^2\left|\frac
      1{q(x,\theta')}- \frac 1{q(x,\theta)} \right|\dd x \to 0.
    \]
\end{lem}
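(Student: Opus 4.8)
The plan is to recognise this as a parametrised instance of Scheffé's lemma. Write $f_\theta(x) := \pi^2(x)\|\psi(x)\|^2/q(x,\theta)$, so that $v_\psi(\theta) = \int f_\theta(x)\,\dd x$ and the integrand appearing in condition (ii) is exactly $|f_{\theta'} - f_\theta|$. The implication (ii)$\Rightarrow$(i) is then immediate, since
\[
|v_\psi(\theta') - v_\psi(\theta)| = \left| \int \bigl(f_{\theta'}(x) - f_\theta(x)\bigr)\,\dd x \right| \le \int \bigl|f_{\theta'}(x) - f_\theta(x)\bigr|\,\dd x,
\]
and the right-hand side tends to $0$ by assumption.

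For the converse (i)$\Rightarrow$(ii), I would first reduce the limit over the continuous parameter to limits along sequences: it suffices to show that $\int |f_{\theta_n} - f_\theta|\,\dd x \to 0$ for every sequence $\theta_n \to \theta$ in $\Theta$. Fix such a sequence. Because $\theta \mapsto q(x,\theta)$ is continuous and strictly positive on $\mathscr X$ by the standing assumptions, we have $1/q(x,\theta_n) \to 1/q(x,\theta)$, hence $f_{\theta_n}(x) \to f_\theta(x)$ for every $x \in \mathscr X$. Moreover each $f_{\theta_n}$ is nonnegative and integrable (as $v_\psi(\theta_n) < \infty$), and $\int f_{\theta_n}\,\dd x = v_\psi(\theta_n) \to v_\psi(\theta) = \int f_\theta\,\dd x$ by (i).

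The core step is the Scheffé argument itself. Using the pointwise identity $|a - b| = (a - b) + 2(b - a)^{+}$ with $a = f_{\theta_n}(x)$ and $b = f_\theta(x)$, I integrate to obtain
\[
\int \bigl|f_{\theta_n} - f_\theta\bigr|\,\dd x = \bigl(v_\psi(\theta_n) - v_\psi(\theta)\bigr) + 2\int \bigl(f_\theta - f_{\theta_n}\bigr)^{+}\,\dd x.
\]
The first bracket vanishes in the limit by (i). For the second term, I note that $0 \le (f_\theta - f_{\theta_n})^{+} \le f_\theta$ (since $f_{\theta_n} \ge 0$), that $f_\theta$ is integrable, and that $(f_\theta - f_{\theta_n})^{+} \to 0$ pointwise; dominated convergence then forces $\int (f_\theta - f_{\theta_n})^{+}\,\dd x \to 0$. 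Hence $\int |f_{\theta_n} - f_\theta|\,\dd x \to 0$, which is (ii).

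I do not expect a serious obstacle here: the only delicate point is securing an integrable dominating function for the dominated-convergence step, and this is precisely why nonnegativity of $f_\theta$ is essential. One cannot dominate $|f_{\theta_n} - f_\theta|$ directly, so the splitting into a signed part (controlled by the convergence of integrals in (i)) and a positive part (controlled by the clean bound $(f_\theta - f_{\theta_n})^{+} \le f_\theta \in L^1$) is exactly what makes the argument go through.
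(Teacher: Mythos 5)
Your proof is correct and follows essentially the same route as the paper: both reduce (i)$\Rightarrow$(ii) to sequences $\theta_n\to\theta$, use pointwise convergence of the nonnegative integrands together with convergence of their integrals, and conclude $L^1$ convergence via Scheff\'e's lemma. The only cosmetic difference is that you prove Scheff\'e directly through the decomposition $|a-b|=(a-b)+2(b-a)^{+}$ and dominated convergence, whereas the paper phrases the same step probabilistically, viewing the integrands as random variables $Z_n$ under $\Pi$ and invoking uniform integrability.
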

\begin{proof}
  Clearly, {(ii)} implies {(i)}. It remains to show that
  {(i)} implies {(ii)}, that is to say, if we assume
  {(i)}, then, for any
  sequence $\theta_n$ that converges to $\theta$,
  $\lim_n A_n = 0$, where
  \[
  A_n:=\int \pi^2(x)\|\psi(x)\|^2\left|\frac 1{q(x,\theta_n)}- \frac
    1{q(x,\theta)} \right|\dd x.
  \]
  To this aim, fix a random variable $X$ with distribution $\pi$ and
  set 
  \[
  Z_n:= \pi(X) \|\psi(X)\|^2 \Big/ q(X,\theta_n),\quad Z=\pi(X)
  \|\psi(X)\|^2 \Big/ q(X,\theta)
  \]
  so that $\esp(Z_n)=\esp|Z_n|=\int \pi^2(x) \|\psi(x)\|^2 \Big/
  q(x,\theta_n)\dd x$.
  With the continuity conditions on the family $Q(\theta)$, $Z_n\to Z$
  (almost) surely, and with {(i)},
  $\esp|Z_n|\to\esp|Z|$. Hence $Z_n$ is uniformly integrable and
  $A_n=\esp|Z_n-Z|$ tends to $0$.
\end{proof}


In order to apply Theorem~\ref{thm:Cappe:1}, we will also need the
uniform integrability on compact set written below.
\begin{lem}\label{lem:A1}
  If $\psi$ is in $\mathscr G^2(\mathbb R^d)$, on any compact set
  $K\subset\Theta$, we have
  \[
    \lim_{\eta \to +\infty}\  \sup_{\theta \in K}\ 
    \int \pi(x) \|\psi(x)\| \mathbf 1\left\{ 
      \frac{\pi(x) \|\psi(x) \|}{q\big(x, \theta\big) } > 
    \eta  \right\}\dd x = 0.
  \]
\end{lem}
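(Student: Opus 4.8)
The goal is a uniform integrability statement over a compact parameter set $K\subset\Theta$. Let me think about what the integrand is: for each $\theta$, we integrate $\pi(x)\|\psi(x)\|$ over the region where the ratio $\pi(x)\|\psi(x)\|/q(x,\theta)$ exceeds the threshold $\eta$. The natural tool is the $\mathscr G^2$ assumption, which controls $\int \pi^2\|\psi\|^2/q(x,\theta)\,\dd x$ and, crucially via Lemma~\ref{lem:G2}, guarantees this quantity depends continuously on $\theta$. The plan is to bound the truncated integral using a Cauchy--Schwarz / Markov-type argument that converts the $L^1$-truncation into the finite $L^2$-type quantity $v_\psi(\theta)$, and then to upgrade the pointwise-in-$\theta$ bound to a uniform-over-$K$ bound using compactness together with the continuity of $v_\psi$.

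Let me sketch the key inequality first. On the event $\{\pi\|\psi\|/q(x,\theta)>\eta\}$ we have $1 < \pi(x)\|\psi(x)\|/(\eta\, q(x,\theta))$, so

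I would estimate

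\begin{equation*}
\int \pi(x)\|\psi(x)\|\,\mathbf 1\!\left\{\tfrac{\pi(x)\|\psi(x)\|}{q(x,\theta)}>\eta\right\}\dd x
\;\le\;
\frac{1}{\eta}\int \frac{\pi^2(x)\|\psi(x)\|^2}{q(x,\theta)}\,\dd x
\;=\;
\frac{v_\psi(\theta)}{\eta}.
\end{equation*}

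This already gives, for each fixed $\theta$, decay of order $1/\eta$. To make the bound uniform over $\theta\in K$, I would take the supremum: $\sup_{\theta\in K}\int(\cdots)\,\dd x \le \eta^{-1}\sup_{\theta\in K} v_\psi(\theta)$. Since $\psi\in\mathscr G^2(\mathbb R^d)$, the function $\theta\mapsto v_\psi(\theta)$ is finite and continuous on $\Theta$ (this is exactly the defining continuity condition, equivalently characterised in Lemma~\ref{lem:G2}), hence it attains a finite maximum $M_K:=\sup_{\theta\in K}v_\psi(\theta)<\infty$ on the compact set $K$. Therefore the supremum is bounded by $M_K/\eta$, which tends to $0$ as $\eta\to+\infty$, giving the claim.

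I do not anticipate a genuinely hard step here; the argument is essentially a uniform Markov inequality. The one point requiring care is the uniformity: the naive per-$\theta$ estimate must survive taking the supremum, and this is precisely where compactness of $K$ plus continuity of $v_\psi$ is used to replace $v_\psi(\theta)$ by its finite maximum $M_K$. If one wished to avoid invoking continuity and merely had finiteness of $v_\psi(\theta)$ for each $\theta$, the supremum over a non-compact set could blow up; the compactness hypothesis on $K$ in the statement is exactly what prevents this. Thus the only substantive ingredient beyond the elementary truncation inequality is the extreme value theorem applied to the continuous map $\theta\mapsto v_\psi(\theta)$ on $K$.
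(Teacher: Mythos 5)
Your proof is correct and is essentially identical to the paper's own argument: both use the Markov-type truncation bound $\mathbf 1\{y>\eta\}\le \eta^{-1}y$ to dominate the integral by $\eta^{-1}v_\psi(\theta)$, and then invoke the continuity of $v_\psi$ (from the definition of $\mathscr G^2(\mathbb R^d)$) together with compactness of $K$ to bound $v_\psi$ uniformly by a finite constant $M$, giving the uniform rate $M/\eta\to 0$. No gaps; nothing further is needed.
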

\begin{proof} Fix a compact set $K$.
  Since $\mathbf 1\{|y|>\eta\}\le \eta^{-1}|y|$  for any $y\ge0$, 
  we have
  \[
  \int \pi(x) \|\psi(x)\| \mathbf 1\left\{ 
      \frac{\pi(x) \|\psi(x) \|}{q\big(x, \theta\big) } > 
    \eta  \right\}\dd x \le \eta^{-1} 
  \int \frac{\pi^2(x) \|\psi(x)\|^2}{q(x, \theta)}\dd x.
  \] 
  The last integral depends continuously on $\theta$ and,
  consequently, is bounded by some finite $M$ on the compact set
  $K$. Therefore
  \[
  \sup_{\theta\in K}  \int \pi(x) \|\psi(x)\| \mathbf 1\left\{ 
      \frac{\pi(x) \|\psi(x) \|}{q\big(x, \theta\big) } > 
    \eta  \right\}\dd x \le \frac{M}{\eta},
  \]
  and the desired result is proven.
\end{proof}

\subsection{Proof of weak consistency of Theorem~\ref{thm:strong}}
\label{sub:prove1}

We define the 
$\sigma$-fields $\mathcal F_t=\sigma \big(X_1^1, \ldots, X^1_{N_1}, \ldots,
X^{t-1}_1, \ldots,X^{t-1}_{N_{t-1}} \big)$ which form a filtration.
We will derive the convergence applying
Theorem~\ref{thm:Cappe:1} on the triangular array given by
\begin{equation*}
  V_{t,i} :=\frac{\pi\big(X_i^t\big)}{ N_tq\big(X_i^t,
    \widehat{\theta}_{t}\big)}h\big(X_i^t\big).
\end{equation*}
Indeed, we have $\widehat\theta_{t+1}=\sum_{i=1}^{N_t} V_{t,i}$
The main difficulty here is in checking assumption \textit{(iii)} of that
theorem. Below, we begin by proving that $\widehat\theta_t$ is tight, then we
check that assumption \textit{(iii)} of Theorem~\ref{thm:Cappe:1} is true, and
conclude our proof.

\paragraph{Tightness of $\widehat\theta_t$.} We have 
 \begin{equation*}
    \big\|\widehat \theta_{t+1} \big\| \leq \frac{1}{N_t} \sum_{i =1}^{N_t}
    \frac{\pi\big(X^t_i\big) }%
    {q\big(X^t_i, \widehat   \theta_{t}\big)} \big\|h\big(X^t_i\big) \big\|.
  \end{equation*}
Moreover 
  \begin{equation*}
    \esp\left(\frac{\pi(X^t_i) }{q\left(X^t_i, 
          \widehat   \theta_{t}\right)} \big\|h(X^t_i)\big\| \Bigg|  \mathcal F_t  \right) 
    = \int \pi(x) \big\|h(x)\big\| \dd x,
  \end{equation*}
 therefore $\esp \left( \big\| \widehat{\theta}_{t+1} \big\|\right) 
\le \Pi\left(\big\|h\big\|\right)$. 
Then Markov's inequality leads to 
 \begin{equation*}
  \underset{c \to +\infty}{\lim} \underset{t}{\sup} \prob \big(\| \widehat
      \theta_t \| > c \big) \le \underset{c \to +\infty}{\lim} \frac{\Pi\big(\|h\|\big)}{c}
      = 0.
    \end{equation*}
Hence, the sequence $\big\{\widehat{\theta}_t\}_{t \ge 1}$ is tight.

\paragraph{Checking condition \textit{(iii)} of Theorem~\ref{thm:Cappe:1}.}
We have to show that, for any $\eta>0$, the following $S_t$ tends to $0$ in
probability, with
\begin{align*}
  S_t&:=\sum_{i=1}^{N_t} \esp \Big[\big\|V_{t,i}\big\|\mathbf 1\Big\{
  \big\|V_{t,i} \big\|  > \eta \Big\} \Big| \mathcal F_t \Big] 
  = \int \pi(x)\big\|h(x)\big\| \mathbf 1
  \left\{\frac{\pi(x)\|h(x)\|}{q\big(x,
      \widehat{\theta}_{t}\big)} > N_t \eta \right\}\dd x \\
  & = I(N_t\eta, \widehat\theta_t) \quad \text{by setting }
  I\left(\eta,  \theta \right) := \int \pi(x) \|h(x)\| 
  \mathbf 1\left\{ \frac{\pi(x) \|h(x)\|}{ q(x, \theta) } > \eta \right\} \dd x.
\end{align*}

Fix any $\epsilon > 0$. Using tightness of $\widehat{\theta}_t$ proven above,
we might introduce a compact set $K_\epsilon$ such that, for all $t\ge 1$,
$\prob(\widehat\theta_t\not\in K_\epsilon)\le \epsilon$.  On the event
$\big\{\widehat\theta_t\not\in K_\epsilon\big\}$, we have $I(N_t\eta, \widehat\theta_t)\le
I(0, \widehat{\theta}_t)=\pi(\|h\|)$. Furthermore, on the event $\big\{\widehat\theta_t\in
K_\epsilon\big\}$,
\begin{equation}
  I(N_t\eta, \widehat{\theta}_t) \le \sup_{\theta \in K_\epsilon}\int \pi(x) \|h(x)\| \mathbf 1\left\{
    \frac{\pi(x)\|h(x)\|}{q(x,\theta)}>N_t\eta \right\}\dd x.
  \label{eq:majore.It}
\end{equation}
Lemma~\ref{lem:A1} implies that the right-hand side of \eqref{eq:majore.It} goes
to $0$ since $N_t\eta$ tends to infinity. Thus,
\begin{equation*}
  \limsup \esp(I(N_t\eta, \widehat\theta_t))  
  \le \pi(\|h\|)\limsup\prob\big(\widehat\theta_t\not\in K_\epsilon\big) \le \pi(\|h\|) \epsilon.
\end{equation*}
Since $\epsilon$ was arbitrary, we have proven that $S_t\to 0$ in $L^1(\prob)$
hence in probability, and condition \textit{(iii)}  is fulfilled.

\paragraph{Conclusion.}
We are now in a position to apply Theorem~\ref{thm:Cappe:1}. Condition
\textit{(i)} is satisfied because, by construction, given $\mathcal F_t$, the
random variables $V_{t,i}$, for $1\le i \le N_t$ are conditionally
independent and $\esp \big[\| V_{t,i}\| \,\big|\, \mathcal F_t\big] <
+\infty$. Condition \textit{(ii)} is easy to check because, here,
\begin{equation*}
  \sum_{i = 1}^{N_t} \esp\Big[\big\|V_{t,i} \,\big\|\,  \Big| \mathcal
  F_t\Big] = \int \|h(x)\| \pi(x) \dd x
\end{equation*}
is not random (thus is a tight sequence). And condition \textit{(iii)} has been
proven above. Hence the conclusion of Theorem~\ref{thm:Cappe:1} is true and
the proof is completed.  \hfill $\qed$

\subsection{Proof of the almost sure convergence of Theorem~\ref{thm:strong}}
\label{sub:prove2}

\paragraph{Bounding the conditional probabilities.} Fix $\eps>0$.
Using that, conditionally on $\widehat\theta_t=\theta$, the $X^t_i$'s
are iid from distribution $Q(\theta)$, we have
\begin{align*}
\var \Big( \big\| \widehat\theta_{t+1} &- \theta^\ast \big\| \Big| 
\widehat{\theta}_t=\theta\Big) 
= \frac{1}{N_t} \int \left\|
   \frac{\pi(x)}{q(x,\theta)}h(x) - \theta ^\ast
 \right\|^2 q(x, \theta) \dd x 
\\
& = \frac{1}{N_t} \int\left\{ \left\|
   \frac{\pi(x)}{q(x,\theta)}h(x)\right\|^2 - 2\left\langle \frac{\pi(x)}{q(x,\theta)}h(x),\theta ^\ast\right\rangle
 +\|\theta^\ast\|^2 \right\}  q(x, \theta) \dd x 
\quad
\\
& =
\frac{v(\theta)-\|\theta^\ast\|^2}{N_t} 
\end{align*}
where $v(\theta)=\int\pi(x)^2\|h(x)\|^2/q(x,\theta)\dd x$ is finite and
continuous because $h\in \mathscr G^2(\mathbb R^d)$. 

With the continuity assumption, $v(\theta)-\|\theta^\ast\|^2$
is bounded from above by some finite constant, $K_\eps$ say, on the
compact ball,
$\bar{\mathcal B}(\theta, \eps)$ say,
centered on $\theta^\ast$, of radius $\eps$ and the conditional
Chebyshev inequality gives, for all $\theta\in \bar{\mathcal B}(\theta^\ast,\eps)$,
\begin{equation}\label{eq:varttt}
  \prob  \Big( \left\| \widehat\theta_{t+1} - \theta^\ast \right\| > \eps\Big| 
\widehat{\theta}_t=\theta\Big)\le \frac{K_\eps}{\eps^2N_t}.
\end{equation}
Multiplying by $\mathbf 1\{\theta\in \bar{\mathcal B}(\theta^\ast,\eps)\}$
on both side of the above inequality and integrating over the
distribution of $\widehat{\theta}_t$ leads to
\begin{equation}
  \label{eq:chebyshev}
   \prob  \Big( \left\| \widehat\theta_{t+1} - \theta^\ast \right\| > \eps\Big| 
   \|\widehat{\theta}_t-\theta^\ast\|\le \eps\Big)\le \frac{K_\eps}{\eps^2N_t}.
\end{equation}

\paragraph{Proving the almost sure convergence.}
Now, we recall that $\widehat\theta_t$ forms a (time- inhomogeneous)
Markov chain. Thus, using \eqref{eq:chebyshev},
\begin{align*}
\prob\bigg(
\bigcap_{t=T}^{T'} \|\widehat\theta_{t+1} - \theta^\ast\| \le \eps
\bigg) & = \prob\Big(\|\widehat\theta_{T+1} - \theta^\ast\|\le\eps\Big)
\prod_{t=T+1}^{T'-1} \prob\bigg(
\| \widehat\theta_{t+1} - \theta^\ast\| \le \eps \bigg|
\|\widehat\theta_{t}-\theta^\ast\|\le\eps
\bigg)
\\
&\ge \prob\Big(\|\widehat\theta_{T+1}-\theta^\ast\|\le\eps\Big)
\prod_{t=T+1}^{T'-1} \left( 1 - \frac{K_\eps}{\eps^2N_t} \right).
\end{align*}
And, when $T'\to\infty$, we obtain
\[
\prob\bigg(
\bigcap_{t\ge T} \|\widehat\theta_{t+1}-\theta^\ast\|\le\eps
\bigg) \ge \prob\Big(\|\widehat\theta_{T+1}-\theta^\ast\|\le\eps\Big)
\prod_{t\ge T+1} \left( 1 - \frac{K_\eps}{\eps^2N_t} \right).
\]
Applying the logarithm on the product and classical results on series,
because $\sum_t 1/N_t$ is finite, the infinite product $\prod_{t}( 1 -
{K_\eps}/{\eps^2N_t})$ converges (that is to say the limit is finite
and strictly positive). In particular, the remainder of the infinite
product in the right hand side of the above inequality tends to $1$
when $T\to\infty$. Furthermore, because of the convergence in
probability proven in Paragraph~\ref{sub:prove2},
$\prob\Big(\|\widehat\theta_{T+1}-\theta^\ast\|\le\eps\Big)$ tends also
to $1$ and thus
\[
\lim_{T\to\infty} \prob\bigg(
\bigcap_{t\ge T} \|\widehat\theta_{t+1}-\theta^\ast\|\le\eps
\bigg) =1.
\]
And we have proved that $\limsup_{T\to\infty}
\|\widehat\theta_{T}-\theta^\ast\| \le\eps$ almost surely. Since
$\eps$ is arbitrary, this proves the desired almost sure convergence.

\section{Proof of the convergence of the final recycling scheme}
\label{sec:proofs2}

The last step is to prove Theorem~\ref{thm:Consistency},
\textit{i.e.}, that the AMIS estimator 
\begin{equation*}
  \widehat{\Pi}^{\text{AMIS}}_T(\psi) = \frac{1}{\Omega_T} \sum_{t = 1}^T
  \sum_{i = 1}^{N_t} \frac{\pi\big(X_i^t\big)}{\Omega_T^{-1}\sum_{k = 1}^T
           N_kq\Big(X_i^t, \widehat{\theta}_k\Big)} \psi\big(X_i^t\big),  
\end{equation*}
which is the result of the unique recycling step of our scheme, is 
consistent for the integral $\Pi(\psi)$ for a large class of functions
$\psi$.
To this aim, we set 
\begin{equation}\label{eq:auxi}
  \widehat{\Pi}^\ast_T(\psi) := \frac{1}{\Omega_T} \sum_{t = 1}^T
  \sum_{i = 1}^{N_t} \frac{\pi(X_i^t)}{q\big(X_i^t, \theta^\ast\big)}\psi(X_i^t),  
\end{equation}
(which cannot be computed in practice because $\theta^\ast$ is unknown).
Note that the auxiliary variable defined in \eqref{eq:auxi} is a
(weighted) average of the random variables
$\widehat{\pi}_t^\ast(\psi)$ given by 
\begin{equation*}
  \widehat{\pi}_t^\ast(\psi) := \frac{1}{N_t}\sum_{i = 1}^{N_t} 
  \frac{\pi(X_i^t) }{q\big(X_i^t, \theta^\ast\big)} \psi(X_i^t).
\end{equation*}
We also define
\begin{equation}
  \label{eq:Iast}
  I^\ast_\psi(\theta) := \int \left[\frac{\pi(x)\psi(x)}{q(x,
    \theta^\ast)} \right] q(x,\theta) \dd x
\end{equation}
which is the conditional expectation of $\pi_t^\ast(\psi)$ knowing
that $\widehat\theta_t=\theta$.

The proof is organised as follows. After stating useful lemmas, we
prove that the sequence of auxiliary variables are strongly
consistent. We then show that the difference between our estimator and
this auxiliary variable, namely $ \widehat{\Pi}^{\text{AMIS}}_T(\psi)
-\widehat{\Pi}^\ast_T(\psi)$, tends to $0$ almost surely. Hence the
consistency stated in Theorem~\ref{thm:Consistency}.

\subsection{Technical results on the function of class $\mathscr H^2(\mathbb R)$}

The proof of the following results, which deal with the continuity
condition imposed in the definition of $\mathscr G^2(\mathbb R)$, is
left to reader, be very similar to the proof of
Lemma~\ref{lem:G2}. Likewise, this lemma implies that, if some
function $\psi$ belongs to $\mathscr H^2(\mathbb R)$, and if $\varphi$
is another function dominated by $\psi$, then $\varphi \in \mathscr
H^2(\mathbb R)$. 
\begin{lem}\label{lem:H2}
  Assume that, for any $\theta \in \Theta$, the integral
  \[
  w_\psi(\theta) := \int
  {\pi^2(x) \psi^2(x)} \Big/ {q^2\big(x,\theta^\ast\big)}q(x,\theta) \dd x
  \]
  is finite. These conditions are equivalent:
  {\em (i)} $w_\psi$ is continuous at $\theta^\ast$; and 
  {\em (ii)} when $\theta\to\theta^\ast$,
    \[
    \int \frac{\pi^2(x) \psi^2(x)}{q^2\big(x,\theta^\ast\big)} \big|
    q(x,\theta) - q(x,\theta^\ast) \big| \dd x \to 0.
    \]
\end{lem}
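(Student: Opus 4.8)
The plan is to follow the proof of Lemma~\ref{lem:G2} almost verbatim, the sole adaptation being the choice of the reference probability and of the random variable fed into a Scheffé-type argument. First I would dispose of the easy direction. If (ii) holds, then for any sequence $\theta_n\to\theta^\ast$ the reverse triangle inequality gives
\[
\big|w_\psi(\theta_n) - w_\psi(\theta^\ast)\big| \le \int \frac{\pi^2(x)\psi^2(x)}{q^2(x,\theta^\ast)}\big|q(x,\theta_n) - q(x,\theta^\ast)\big|\dd x,
\]
and the right-hand side tends to $0$ by (ii), so $w_\psi$ is continuous at $\theta^\ast$. Hence (ii) implies (i).

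The substantive direction is (i)$\Rightarrow$(ii). Here I would draw a random variable $X$ from the distribution $Q(\theta^\ast)$, that is with density $q(\cdot,\theta^\ast)$ (this is the natural replacement for sampling from $\Pi$ in Lemma~\ref{lem:G2}, since now it is $q(\cdot,\theta)$ rather than $\pi$ that carries the parameter), and set
\[
Z_n := \frac{\pi^2(X)\psi^2(X)}{q^2(X,\theta^\ast)}\cdot\frac{q(X,\theta_n)}{q(X,\theta^\ast)}, \qquad Z := \frac{\pi^2(X)\psi^2(X)}{q^2(X,\theta^\ast)}.
\]
The calculations to record are that, after changing the integrating measure from $q(\cdot,\theta^\ast)\dd x$ back to $\dd x$, one has $\esp Z_n = \esp|Z_n| = w_\psi(\theta_n)$ and $\esp Z = w_\psi(\theta^\ast)$, and crucially that the integral appearing in (ii) is exactly $\esp|Z_n - Z|$.

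With this identification the argument closes precisely as in Lemma~\ref{lem:G2}. By continuity of $\theta\mapsto q(x,\theta)$ we have $q(X,\theta_n)\to q(X,\theta^\ast)$ pointwise, hence $Z_n\to Z$ almost surely (with respect to $Q(\theta^\ast)$); the $Z_n$ are nonnegative; and assumption (i) yields $\esp|Z_n| = w_\psi(\theta_n)\to w_\psi(\theta^\ast) = \esp|Z|$. Nonnegativity together with almost sure convergence and convergence of the integrals makes the family $\{Z_n\}$ uniformly integrable, so $\esp|Z_n - Z|\to 0$, which is (ii).

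I do not expect a genuine obstacle, in line with the authors' remark that the proof is very similar to that of Lemma~\ref{lem:G2}. The only point requiring a little care is selecting the reference distribution correctly, so that all three relevant quantities---$w_\psi(\theta_n)$, $w_\psi(\theta^\ast)$, and the integral in (ii)---arise simultaneously as the expectations and the absolute first moment of the single pair $(Z_n,Z)$ under $Q(\theta^\ast)$; once that bookkeeping is fixed, the Scheffé-type step is automatic.
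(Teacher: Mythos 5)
Your proof is correct and is exactly the adaptation the paper has in mind: the authors leave this lemma to the reader as ``very similar to Lemma~\ref{lem:G2}'', and your choice of reference distribution $Q(\theta^\ast)$ makes $w_\psi(\theta_n)$, $w_\psi(\theta^\ast)$ and the integral in (ii) appear as $\esp Z_n$, $\esp Z$ and $\esp|Z_n-Z|$, after which the Scheff\'e/uniform-integrability step goes through verbatim. No gaps.
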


We also need the following to control the conditional expected values
of $\pi_t^\ast(\psi)$.
\begin{lem} \label{lem:Iast} If, $\psi\in \mathscr H^2(\mathbb R)$,
  then the integrals $I_\psi^\ast(\theta)$ defined in \eqref{eq:Iast}
  are well defined for all $\theta$ and the map $I_\psi^\ast$ is
  continuous at $\theta=\theta^\ast$.
\end{lem}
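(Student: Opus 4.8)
The plan is to derive both claims from the Cauchy--Schwarz inequality, exploiting the density property $\int q(x,\theta)\dd x = 1$ together with the equivalent characterisation of $\mathscr H^2(\mathbb R)$ supplied by Lemma~\ref{lem:H2}. First I would establish that the integral defining $I_\psi^\ast(\theta)$ in \eqref{eq:Iast} converges absolutely. Writing $q(x,\theta)=q(x,\theta)^{1/2}\cdot q(x,\theta)^{1/2}$ and applying Cauchy--Schwarz gives
\[
\int \left| \frac{\pi(x)\psi(x)}{q(x,\theta^\ast)}\right| q(x,\theta)\dd x \le \left(\int \frac{\pi^2(x)\psi^2(x)}{q^2(x,\theta^\ast)} q(x,\theta)\dd x\right)^{1/2}\left(\int q(x,\theta)\dd x\right)^{1/2} = \sqrt{w_\psi(\theta)},
\]
since $q(\cdot,\theta)$ is a probability density. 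As $\psi\in\mathscr H^2(\mathbb R)$ forces $w_\psi(\theta)<\infty$ for every $\theta$, the map $I_\psi^\ast$ is well defined and finite on all of $\Theta$.

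For continuity at $\theta^\ast$, I would control the difference
\[
I_\psi^\ast(\theta) - I_\psi^\ast(\theta^\ast) = \int \frac{\pi(x)\psi(x)}{q(x,\theta^\ast)}\bigl[q(x,\theta) - q(x,\theta^\ast)\bigr]\dd x
\]
by splitting the factor $|q(x,\theta)-q(x,\theta^\ast)|$ into the product of its two square roots before invoking Cauchy--Schwarz. This yields
\[
\bigl| I_\psi^\ast(\theta) - I_\psi^\ast(\theta^\ast)\bigr| \le \left(\int \frac{\pi^2(x)\psi^2(x)}{q^2(x,\theta^\ast)}\bigl|q(x,\theta)-q(x,\theta^\ast)\bigr|\dd x\right)^{1/2}\left(\int \bigl|q(x,\theta)-q(x,\theta^\ast)\bigr|\dd x\right)^{1/2}.
\]
The second factor is at most $\sqrt 2$, because its integrand is dominated by $q(x,\theta)+q(x,\theta^\ast)$, each term integrating to one. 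The first factor is exactly the square root of the quantity appearing in condition (ii) of Lemma~\ref{lem:H2}, which tends to $0$ as $\theta\to\theta^\ast$ precisely because $\psi\in\mathscr H^2(\mathbb R)$ makes $w_\psi$ continuous at $\theta^\ast$, so that (i), and hence (ii), of that lemma hold. Letting $\theta\to\theta^\ast$ then forces the left-hand side to vanish, which is the claimed continuity.

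The only genuinely delicate step is the second one: a naive attempt to pass the limit under the integral sign via dominated convergence stumbles on the lack of an obvious $\theta$-uniform integrable majorant for $|\pi\psi/q(\cdot,\theta^\ast)|\,q(\cdot,\theta)$. Splitting $|q(x,\theta)-q(x,\theta^\ast)|$ symmetrically and applying Cauchy--Schwarz sidesteps this difficulty entirely, reducing the whole argument to the already-established convergence in Lemma~\ref{lem:H2}(ii).
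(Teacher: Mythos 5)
Your proof is correct, and the continuity step takes a genuinely different route from the paper's. For well-definedness both arguments coincide in substance: the paper notes that $Y_\theta=\pi(X_\theta)\psi(X_\theta)/q(X_\theta,\theta^\ast)$ is square-integrable under $Q(\theta)$ (this is $w_\psi(\theta)<\infty$) and hence integrable, which is exactly your Cauchy--Schwarz bound $\int |g|\,q(\cdot,\theta)\dd x\le \sqrt{w_\psi(\theta)}$ phrased probabilistically. For continuity, however, the paper splits \emph{additively}: writing $g(x)=\pi(x)\psi(x)/q(x,\theta^\ast)$, it uses $|g|\le\max(1,g^2)$ to bound
\[
\int |g(x)|\,\big|q(x,\theta)-q(x,\theta^\ast)\big|\dd x
\le \int \big|q(x,\theta)-q(x,\theta^\ast)\big|\dd x
+ \int g^2(x)\,\big|q(x,\theta)-q(x,\theta^\ast)\big|\dd x,
\]
handling the first term by Scheff\'e's theorem (which requires the pointwise convergence $q(x,\theta)\to q(x,\theta^\ast)$ from the standing continuity assumptions) and the second by Lemma~\ref{lem:H2}(ii). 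You instead split \emph{multiplicatively} via Cauchy--Schwarz, which replaces the Scheff\'e term by the trivial bound $\int|q(x,\theta)-q(x,\theta^\ast)|\dd x\le 2$ and funnels everything through Lemma~\ref{lem:H2}(ii) alone. Your version is thus slightly more self-contained: it needs no $L^1$-convergence of the densities and no appeal to Scheff\'e, only the vanishing of the quantity in Lemma~\ref{lem:H2}(ii). The price is purely quantitative and irrelevant here: your bound decays like the square root of the (ii)-quantity, whereas the paper's additive bound is linear in it (plus the Scheff\'e term). Both are complete proofs of the lemma as stated.
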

\begin{proof} Fix any $\theta\in \Theta$.
  If $X_\theta$ is a
  random variable with distribution $Q(\theta)$, then
  $Y_\theta=\pi(X_\theta) \psi(X_\theta)/q(X_\theta, \theta^\ast)$ is
  square integrable since $\psi$ is in $\mathscr H^2(\mathbb
  R)$. Hence $Y_\theta$ is a $L^1$-random variable and its expected
  value, namely $I_\psi(\theta)$ is well defined.

  Now, set $g(x)=\pi(x) \psi(x) / q(x, \theta^\ast)$. We have
  $|g(x)|\le \max(1, g^2(x))$, thus
  \[
  \int |g(x)|\big| q(x,\theta) - q(x,\theta^\ast) \big| \dd x \le
  \int \big| q(x,\theta) - q(x,\theta^\ast) \big| \dd x
  +
  \int g^2(x) \big| q(x,\theta) - q(x,\theta^\ast) \big| \dd x
  \]
  The first integral in this bound goes to $0$ because of Scheff\'e's
  Theorem, see e.g. \citet{Billingsley}, Theorem 16.12 p. 215. The second integral
  goes also to $0$, because $\psi$ is in $\mathscr H^2(\mathbb R)$ and
  because of Lemma~\ref{lem:H2}. Whence
  \[
  |I_\psi^\ast(\theta)-I_\psi^\ast(\theta^\ast)| \le
  \int |g(x)| \big| q(x,\theta) - q(x,\theta^\ast) \big| \dd x \to 0. \qedhere
  \]
\end{proof}

\subsection{Convergence of some auxiliary variables}

\begin{pro}\label{pro:petit.pi}
  Assume that $h\in\mathscr G^2(\mathbb R^d)$, $\sum_t 1/N_t$ is
  finite and  $\psi\in \mathscr H^2(\mathbb R)$. 
  When $t\to\infty$,
  $\big(\widehat{\pi}_t^\ast(\psi) -
  I^\ast_\psi(\widehat{\theta}_{t-1}) \big)$ tends to $0$ almost
  surely. 
  Moreover, under those assumptions,
  $I_\psi^\ast(\widehat\theta_t)\to I_\psi^\ast(\theta^\ast) = \Pi(\psi)$. 
\end{pro}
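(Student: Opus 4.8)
The plan is to treat the two assertions in turn, both leaning on the already established almost sure convergence $\widehat{\theta}_t \to \theta^\ast$ from Theorem~\ref{thm:strong} (available since $h \in \mathscr{G}^2(\mathbb{R}^d)$ and $\sum_t 1/N_t < \infty$) together with the continuity results of Lemmas~\ref{lem:H2} and~\ref{lem:Iast}. First I would record the conditional structure of $\widehat{\pi}_t^\ast(\psi)$: since $\widehat{\theta}_t$ is $\mathcal F_t$-measurable and, given $\mathcal F_t$, the draws $X_1^t,\ldots,X_{N_t}^t$ are iid from $Q(\widehat{\theta}_t)$, the summands $\pi(X_i^t)\psi(X_i^t)/q(X_i^t,\theta^\ast)$ are conditionally iid with conditional mean exactly $I_\psi^\ast(\widehat{\theta}_t)$, which is finite and well defined by Lemma~\ref{lem:Iast} because $\psi \in \mathscr{H}^2(\mathbb{R})$. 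A direct second-moment computation, analogous to the variance calculation in Section~\ref{sub:prove2}, then gives
\[
\var\Big(\widehat{\pi}_t^\ast(\psi) \,\Big|\, \widehat{\theta}_t = \theta\Big) = \frac{w_\psi(\theta) - I_\psi^\ast(\theta)^2}{N_t},
\]
where $w_\psi(\theta) = \int \pi^2(x)\psi^2(x)\,q(x,\theta)/q^2(x,\theta^\ast)\,\dd x$ is the quantity appearing in Lemma~\ref{lem:H2}.

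The heart of the argument is then a Chebyshev/Borel--Cantelli estimate. Fix $\delta > 0$. Because $\psi \in \mathscr{H}^2(\mathbb{R})$, $w_\psi$ is continuous at $\theta^\ast$, and $I_\psi^\ast$ is continuous there by Lemma~\ref{lem:Iast}; hence the numerator $w_\psi(\theta) - I_\psi^\ast(\theta)^2$ is bounded by a finite constant $C$ on a closed ball $\bar B(\theta^\ast,\eps_0)$. Writing $A_t = \{|\widehat{\pi}_t^\ast(\psi) - I_\psi^\ast(\widehat{\theta}_t)| > \delta\}$ and $B_t = \{\widehat{\theta}_t \in \bar B(\theta^\ast,\eps_0)\}$, the conditional Chebyshev inequality yields $\prob(A_t \mid \mathcal F_t) \le C/(\delta^2 N_t)$ on $B_t$, so that $\prob(A_t \cap B_t) \le C/(\delta^2 N_t)$ and, summing, $\sum_t \prob(A_t \cap B_t) < \infty$ thanks to $\sum_t 1/N_t < \infty$. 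Borel--Cantelli gives $\prob(A_t \cap B_t \text{ i.o.}) = 0$, while $\widehat{\theta}_t \to \theta^\ast$ almost surely forces $\prob(B_t^c \text{ i.o.}) = 0$; combining, $\prob(A_t \text{ i.o.}) = 0$. Thus $\limsup_t |\widehat{\pi}_t^\ast(\psi) - I_\psi^\ast(\widehat{\theta}_t)| \le \delta$ almost surely, and letting $\delta \downarrow 0$ along a sequence gives the first claim.

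The second claim is then immediate: $I_\psi^\ast$ is continuous at $\theta^\ast$ and $\widehat{\theta}_t \to \theta^\ast$ almost surely, so $I_\psi^\ast(\widehat{\theta}_t) \to I_\psi^\ast(\theta^\ast) = \int \pi(x)\psi(x)\,\dd x = \Pi(\psi)$. The same limit holds with $\widehat{\theta}_{t-1}$ in place of $\widehat{\theta}_t$, since both converge to $\theta^\ast$, which reconciles the indexing used in the statement.

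I expect the only real obstacle to be the careful separation into the ``good'' event $B_t$, on which the conditional variance is controlled, and its complement, which is handled through the almost sure convergence of $\widehat{\theta}_t$ rather than through any uniform-in-$\theta$ bound. The continuity of $w_\psi$ at $\theta^\ast$ (not merely its finiteness) is exactly what makes the variance bound $C$ available on a neighbourhood, and this is the precise point where the $\mathscr{H}^2(\mathbb{R})$ hypothesis enters; everything else reduces to the second-moment computation and a routine Borel--Cantelli argument mirroring Section~\ref{sub:prove2}.
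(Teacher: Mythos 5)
Your proof is correct, but it closes the argument by a genuinely different route than the paper. Both proofs share the same skeleton: the conditional variance identity $\var\big(\widehat{\pi}_t^\ast(\psi)\,\big|\,\widehat\theta_t=\theta\big)=\big(w_\psi(\theta)-I_\psi^\ast(\theta)^2\big)/N_t$, the local bound on this variance near $\theta^\ast$ supplied by the continuity built into the $\mathscr H^2(\mathbb R)$ hypothesis, and Lemma~\ref{lem:Iast} plus Theorem~\ref{thm:strong} for the second claim. The difference lies in how the variance bound becomes almost sure convergence. The paper first establishes that $\Delta_{t+1}=\widehat{\pi}_{t+1}^\ast(\psi)-I_\psi^\ast(\widehat\theta_t)$ tends to $0$ \emph{in probability} (by re-running the triangular-array argument of Theorem~\ref{thm:Cappe:1}), then mimics Paragraph~\ref{sub:prove2}: it couples the two controls into the event $A_{t+1}=\{|\Delta_{t+1}|\le\eps\}\cap\{\|\widehat\theta_{t+1}-\theta^\ast\|\le\eps\}$, uses the Markov property of $(\Delta_t,\widehat\theta_t)$ to get $\prob\big(\bar A_{t+1}\,\big|\,A_t\big)\le K/N_t$, and bounds $\prob\big(\bigcap_{t\ge T}A_{t+1}\big)$ from below by $\prob(A_{T+1})\prod_{t\ge T+1}(1-K/N_t)$, a product that converges because $\sum_t 1/N_t<\infty$; the in-probability convergence is then indispensable to conclude $\prob(A_{T+1})\to 1$. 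You instead take the almost sure convergence $\widehat\theta_t\to\theta^\ast$ already delivered by Theorem~\ref{thm:strong}, localize on $B_t=\{\widehat\theta_t\in\bar B(\theta^\ast,\eps_0)\}$, which is $\mathcal F_t$-measurable, and apply Borel--Cantelli to $A_t\cap B_t$, disposing of $\{B_t^c \text{ i.o.}\}$ by the almost sure convergence itself. This is shorter and cleanly decouples the learning process from the auxiliary estimator: you need neither the extra weak-LLN step nor the Markov-chain/infinite-product machinery, which is really only forced in Theorem~\ref{thm:strong}, where no prior almost sure localization of $\widehat\theta_t$ is available. One small point: your final remark reconciling $\widehat\theta_{t-1}$ versus $\widehat\theta_t$ via continuity of $I_\psi^\ast$ is needed for the \emph{first} claim as well (your Borel--Cantelli argument controls $\widehat{\pi}_t^\ast(\psi)-I_\psi^\ast(\widehat\theta_t)$, whereas the statement is phrased with $I_\psi^\ast(\widehat\theta_{t-1})$), but the continuity argument you give does supply exactly this bridge, so no gap remains.
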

\begin{proof} The last result follows from
  Theorem~\ref{thm:strong} and continuity of $I_\psi^\ast$ at
  $\theta^\ast$ proven in Lemma~\ref{lem:Iast}.  Adapting the proof
  written in Paragraph~\ref{sub:prove1}, we also have that
  $\big(\widehat{\pi}_t^\ast(\psi) -
  I^\ast_\psi(\widehat{\theta}_{t-1}) \big)$ tends to $0$ in
  probability
 
  The rest of the proof is inspired from the part of the proof of
  Theorem~\ref{thm:strong} written in Paragraph~\ref{sub:prove2}. 
  Fix $\eps>0$. Set $\Delta_{t+1}=\widehat{\pi}_{t+1}^\ast(\psi) -
    I^\ast_\psi(\widehat{\theta}_{t})$, 
    $A_{t+1}=\Big\{ |\Delta_{t+1}|\le \eps\Big\} \cap \Big\{
    \|\widehat\theta_{t+1}-\theta^\ast\|\le \eps \Big\}$ and $\bar
    A_{t+1}$ the complementary event.
  We have
  \[ 
  \esp\Big\{ (\Delta_{t+1})^2\Big|
  \widehat\theta_t=\theta\Big\} = \frac{1}{N_t} \bigg[\int
  \Big(\pi(x) \psi(x)\big/ q\big(x,
  \theta^\ast\big)\Big)^2 q\big(x, \theta\big) \dd x -
  I^\ast(\theta)^2 \bigg].
  \]
  and the term between brackets is bounded from above by some
  finite $K_\eps'$ for any $\theta\in \bar{\mathcal
  B}(\theta^\ast,\eps)$ because $\psi\in \mathscr H^2(\mathbb R)$. 
  Thus, for all $\theta\in \bar{\mathcal B}(\theta^\ast,\eps)$,
  \begin{align*}
  \prob\Big(\bar A_{t+1} \Big| \widehat\theta_t=\theta
  \Big) &\le
  \prob\Big(|\Delta_{t+1}| > \eps\ \Big| \widehat\theta_t=\theta \Big) + \prob\Big(
  \|\widehat\theta_{t+1}-\theta^\ast\|\ge\eps \Big| \widehat\theta_t=\theta
  \Big)
  \\ 
  & \le \frac{K_\eps'+K_\eps}{\eps^2N_t},
  \end{align*}
  where we have used the Chebyshev inequality and resorted
  \eqref{eq:varttt} to bound the first and second
  probabilities respectively.
  Hence $\prob \Big(\bar A_{t+1} \Big| A_t
  \Big)\le K/N_t$ for some finite $K$.
  Now, since $(\Delta_{t},\widehat{\theta}_t)$ is a
  (time-inhomogeneous) Markov chain, we have
  \[
  \prob\left(\bigcap_{t=T}^\infty  A_{t+1} \right) \ge 
  \prob \left( A_{T+1} \right)
  \prod_{T=t}^\infty \left(1-\frac{K}{N_t}\right)
  \]
  The above infinite product tends to $1$ since $\sum_t 1/N_t$ is
  finite.
  And $\prob \left( A_{T+1} \right)\to 1$ because both $\Delta_{T+1}$
  and $\|\widehat{\theta}_{T+1}-\theta^\ast\|$ tends to $0$ in probability.
\end{proof}

We shall now recall that the auxiliary variable $
\widehat{\Pi}_T^\ast(\psi)$ is a weighted average of the
$\widehat{\pi}_t^\ast(\psi)$ for $t=1,\ldots, T$ which are controlled
by Proposition~\ref{pro:petit.pi} proven above.  The following Lemma
is obvious, using Ces\`aro Lemma on sequence of (non random) vectors.
\begin{lem} \label{lem:cesaro} %
  Let $\{U_t\}$ be a sequence of random vectors and $U$ another
  random vector. If $\{b_t\}$ is a sequence of positive
  real numbers such that $B_t=b_1 + \ldots + b_t$ tends to infinity,
  then the event $\Big\{ U_t\to U \Big\}$ is included in the event
  $\Big\{ B_t^{-1}\sum_{k = 1}^t b_k U_k \to U_\infty\Big\}$.
\end{lem}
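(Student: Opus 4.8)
The plan is to recognise that this is a pathwise statement in disguise: the weights $b_k$ and their partial sums $B_t$ are deterministic, so it suffices to fix an outcome $\omega$ in the event $\{U_t\to U\}$ and to apply the deterministic weighted Ces\`aro (Toeplitz) lemma to the realisations $u_k:=U_k(\omega)$ and $u:=U(\omega)$. For such an $\omega$ one has $u_k\to u$ in $\R^d$, and the goal is to show $B_t^{-1}\sum_{k=1}^t b_k u_k\to u$, after identifying the limit denoted $U_\infty$ in the statement with $U$.

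First I would reduce the lemma to the deterministic claim: if $u_k\to u$, if $b_k>0$ and $B_t=b_1+\cdots+b_t\to\infty$, then $B_t^{-1}\sum_{k=1}^t b_k u_k\to u$. Since $B_t^{-1}\sum_{k=1}^t b_k=1$, one may write the error as $B_t^{-1}\sum_{k=1}^t b_k(u_k-u)$ and run the standard $\eps$-split. Given $\eps>0$, choose $K$ so that $\|u_k-u\|<\eps$ for every $k>K$, and cut the sum at index $K$. The head $B_t^{-1}\sum_{k\le K}b_k(u_k-u)$ has a fixed numerator while $B_t\to\infty$, hence tends to $0$; the tail obeys $\big\|B_t^{-1}\sum_{K<k\le t}b_k(u_k-u)\big\|\le B_t^{-1}\sum_{K<k\le t}b_k\,\eps\le\eps$, because the $b_k$ are positive so that $\sum_{K<k\le t}b_k\le B_t$. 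Thus $\limsup_t\big\|B_t^{-1}\sum_{k=1}^t b_k u_k-u\big\|\le\eps$ for arbitrary $\eps>0$, which proves $B_t^{-1}\sum_{k=1}^t b_k u_k\to u$. Coming back to the random setting, every $\omega\in\{U_t\to U\}$ then lies in $\{B_t^{-1}\sum_{k=1}^t b_k U_k\to U\}$, which is exactly the asserted inclusion.

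I expect no genuine obstacle here, which is why the lemma is flagged as obvious; the only point worth care is that the two structural hypotheses each do exactly one job. The divergence $B_t\to\infty$ is what annihilates the finite head term, while the positivity $b_k>0$ is what yields the clean tail bound $\sum_{K<k\le t}b_k\le B_t$ (without positivity one could only control $\sum|b_k|$, which need not be comparable to $B_t$, and the conclusion could fail). Finally, no measurability difficulty arises, since the inclusion is established outcome by outcome rather than through any limiting operation on the probability measure.
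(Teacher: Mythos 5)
Your proof is correct and matches the paper's intent exactly: the paper dismisses this lemma as ``obvious, using Ces\`aro Lemma on sequence of (non random) vectors,'' and your argument simply carries out that pathwise reduction in full, fixing an outcome $\omega$ and applying the deterministic weighted Ces\`aro/Toeplitz lemma (proved by the standard $\eps$-split) to the realisations. You also correctly identify the limit $U_\infty$ in the statement with $U$, which is evidently a typo in the paper.
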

This Ces\`aro Lemma and Proposition~\ref{pro:petit.pi} above leads to the following.
\begin{pro} \label{pro:auxiliary} Assume that $h\in\mathscr
  G^2(\mathbb R^d)$, $\sum_t 1/N_t$ is finite and $\psi\in \mathscr
  H^2(\mathbb R)$.  Then
  \begin{equation*}
    \widehat{\Pi}_T^\ast(\psi) \longrightarrow  \Pi(\psi) \quad \text{almost surely}.
  \end{equation*}
\end{pro}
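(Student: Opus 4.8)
The plan is to combine the two conclusions of Proposition~\ref{pro:petit.pi} with the Ces\`aro Lemma (Lemma~\ref{lem:cesaro}) applied pathwise. First recall that, by the definition in \eqref{eq:auxi}, the estimator $\widehat{\Pi}_T^\ast(\psi)$ is exactly the weighted average of the $\widehat{\pi}_t^\ast(\psi)$ with weights $N_t/\Omega_T$, that is,
\[
\widehat{\Pi}_T^\ast(\psi) = \frac{1}{\Omega_T}\sum_{t=1}^T N_t\, \widehat{\pi}_t^\ast(\psi),
\]
so that the role of the sequence $\{b_t\}$ in Lemma~\ref{lem:cesaro} will be played by $b_t=N_t$, and $B_T=\Omega_T=N_1+\cdots+N_T\to\infty$ since the $N_t$ are increasing positive integers.

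Second, I would split the summand as $\widehat{\pi}_t^\ast(\psi) = \big(\widehat{\pi}_t^\ast(\psi)-I_\psi^\ast(\widehat\theta_{t-1})\big) + I_\psi^\ast(\widehat\theta_{t-1})$ and treat the two resulting Ces\`aro averages separately. For the second piece, Proposition~\ref{pro:petit.pi} gives that $I_\psi^\ast(\widehat\theta_t)\to I_\psi^\ast(\theta^\ast)=\Pi(\psi)$ almost surely; hence, on the almost sure event where this convergence holds, Lemma~\ref{lem:cesaro} (with $U_t=I_\psi^\ast(\widehat\theta_{t-1})$ and $U=\Pi(\psi)$) guarantees that the weighted average $\Omega_T^{-1}\sum_{t=1}^T N_t\, I_\psi^\ast(\widehat\theta_{t-1})$ also tends to $\Pi(\psi)$. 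For the first piece, Proposition~\ref{pro:petit.pi} gives that $\Delta_t:=\widehat{\pi}_t^\ast(\psi)-I_\psi^\ast(\widehat\theta_{t-1})\to 0$ almost surely; applying Lemma~\ref{lem:cesaro} once more (with $U_t=\Delta_t$ and $U=0$) shows the corresponding weighted average tends to $0$ on the same almost sure event.

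Finally, intersecting the two almost sure events and adding the two limits yields $\widehat{\Pi}_T^\ast(\psi)\to \Pi(\psi)$ almost surely, which is the claim. The only point requiring a word of care is that Lemma~\ref{lem:cesaro} is stated for a deterministic weight sequence $\{b_t\}$ and deterministic index $B_t\to\infty$; this is exactly our situation because the $N_t$ and $\Omega_T$ are fixed integers, not random, so the lemma applies verbatim on each fixed sample path. I expect no genuine obstacle here: all the analytic work has been front-loaded into Proposition~\ref{pro:petit.pi}, and this proof is essentially a bookkeeping assembly of that proposition with a pathwise Ces\`aro argument.
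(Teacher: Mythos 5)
Your proposal is correct and follows essentially the same route as the paper: both rest on the two conclusions of Proposition~\ref{pro:petit.pi} together with Lemma~\ref{lem:cesaro} applied with $b_t=N_t$. The only (immaterial) difference is that the paper first adds the two conclusions to get $\widehat{\pi}_t^\ast(\psi)\to\Pi(\psi)$ almost surely and then applies the Ces\`aro lemma once, whereas you apply it to each piece of the decomposition separately and add the limits.
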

\begin{proof} Because of 
  Proposition~\ref{pro:petit.pi}, $\widehat{\pi}_t^\ast(\psi)$ tends
  almost surely to $\Pi(\psi)=\int\psi(x)\pi(x)\dd x$.
  Applying Lemma~\ref{lem:cesaro} with
  $b_t=N_t$ yields the convergence of $\widehat{\Pi}_T^\ast(\psi)$.
\end{proof}

\subsection{Controlling the discrepency between the AMIS estimator and the
  auxiliary variable}

The convergence of $\widehat{\Pi}_T^{AMIS}(\psi) -
\widehat{\Pi}_T^\ast(\psi)$ towards $0$ almost surely is proven in
Proposition~\ref{pro:AMIS} below, whose proof relies on some preliminary
result given in Lemma~\ref{lem:D}.  To this end, we define the function
$D_T\big(\cdot \big) : \mathscr X \mapsto \mathbb R_+$
by
\[
D_T\big(x \big) = 
\Omega_T^{-1}\sum_{k = 1}^T N_k q\left(x, \widehat \theta_k \right) 
\]
which appears in the denominator of the updated
weight \eqref{eq:weight:t}. Because of the consistency of the learning
scheme proven in Section~\ref{sec:proofs}, we are able to show in the
following lemma that this denominator resembles the denominator of the
classical importance weight, when the proposal distribution is $Q(\theta^\ast)$.
\begin{lem}
 \label{lem:D}
 Let $K$ be a compact subset of $\Theta$.  The event
 $\Big\{\widehat\theta_t\to\theta^\ast \Big\}$ is included in the
 event where
 \begin{equation*}
   \lim_{T \to +\infty} \left\|
     \frac{q(\cdot,\theta^\ast)}{D_T(\cdot)}-1
     \right\|_{K,\infty}
   = 0. 
 \end{equation*}
\end{lem}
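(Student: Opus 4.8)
The plan is to work on a fixed sample path lying in the event $\{\widehat\theta_t\to\theta^\ast\}$, so that all the quantities below become deterministic, and to establish the announced uniform convergence for such a path. First I would rewrite the object to be controlled as
\[
\frac{q(x,\theta^\ast)}{D_T(x)}-1 = \frac{q(x,\theta^\ast) - D_T(x)}{D_T(x)},
\]
and observe that, since $D_T(x)=\Omega_T^{-1}\sum_{k=1}^T N_k q(x,\widehat\theta_k)$ is a weighted average of the $q(x,\widehat\theta_k)$, the numerator equals $\Omega_T^{-1}\sum_{k=1}^T N_k\big[q(x,\theta^\ast)-q(x,\widehat\theta_k)\big]$. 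Bounding in absolute value and passing to the sup-norm over $K$ gives the key estimate
\[
\sup_{x\in K}\big|q(x,\theta^\ast)-D_T(x)\big| \le \frac{1}{\Omega_T}\sum_{k=1}^T N_k\,\eps_k, \qquad \eps_k := \big\| q(\cdot,\widehat\theta_k)-q(\cdot,\theta^\ast)\big\|_{K,\infty}.
\]

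Next I would control the right-hand side. Because $\widehat\theta_k\to\theta^\ast$ along the chosen path, the assumed uniform convergence of $q(\cdot,\theta)$ to $q(\cdot,\theta^\ast)$ over compact sets yields $\eps_k\to 0$. Since the $N_k$ are positive and $\Omega_T=N_1+\cdots+N_T\to\infty$, the weighted average $\Omega_T^{-1}\sum_{k=1}^T N_k\eps_k$ of the null sequence $\{\eps_k\}$ tends to $0$ by Lemma~\ref{lem:cesaro}, applied with $b_t=N_t$ and the (deterministic) sequence $U_t=\eps_t\to 0$. Hence $\sup_{x\in K}|q(x,\theta^\ast)-D_T(x)|\to 0$, that is, $D_T\to q(\cdot,\theta^\ast)$ uniformly on $K$.

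It then remains to bound the denominator away from zero uniformly in $x\in K$. Since $\theta\mapsto q(x,\theta)$ is continuous and $q$ is positive on $\mathscr X$, the function $x\mapsto q(x,\theta^\ast)$ is continuous and strictly positive on the compact set $K$, so $c:=\min_{x\in K}q(x,\theta^\ast)>0$. Combining this with the uniform convergence $D_T\to q(\cdot,\theta^\ast)$ just established, there is $T_0$ such that $D_T(x)\ge c/2$ for all $x\in K$ and all $T\ge T_0$. Therefore, for $T\ge T_0$,
\[
\left\|\frac{q(\cdot,\theta^\ast)}{D_T(\cdot)}-1\right\|_{K,\infty} \le \frac{2}{c}\,\sup_{x\in K}\big|q(x,\theta^\ast)-D_T(x)\big| \longrightarrow 0,
\]
which is the desired conclusion.

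The only genuinely delicate point is the interchange of the supremum over $x\in K$ with the averaging over $k$: this is what forces us to pass first to the sup-norm quantities $\eps_k$, which are free of $x$, before invoking the Ces\`aro lemma. Once the numerator is handled uniformly, the lower bound on the denominator is immediate from the continuity and positivity of $q(\cdot,\theta^\ast)$ on the compact $K$.
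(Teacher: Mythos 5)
Your argument follows essentially the same route as the paper's proof: the same rewriting of $q(\cdot,\theta^\ast)/D_T(\cdot)-1$ as (numerator)/(denominator), the same passage to the $x$-free sup-norm quantities $\eps_k=\|q(\cdot,\widehat\theta_k)-q(\cdot,\theta^\ast)\|_{K,\infty}$, and the same appeal to Lemma~\ref{lem:cesaro} with $b_t=N_t$ to make the weighted average of the null sequence $\{\eps_k\}$ vanish. The only structural difference is in the lower bound on the denominator: the paper bounds $D_T(x)\ge \frac{\Omega_T-\Omega_\eps}{\Omega_T}\, m_\eps(x)\ge \frac{\Omega_T-\Omega_\eps}{\Omega_T}\, m_{\eps,K}$ directly from the definition of $D_T$, using the function $m_\eps$ of \eqref{eq:mesp}, whereas you deduce $D_T\ge c/2$ on $K$ for large $T$ from the uniform convergence $D_T\to q(\cdot,\theta^\ast)$ that you have just established. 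Both variants work and are of comparable length.

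There is, however, one flawed justification you should repair. You claim that $x\mapsto q(x,\theta^\ast)$ is continuous on $K$ ``since $\theta\mapsto q(x,\theta)$ is continuous and $q$ is positive''. Continuity in $\theta$ for each fixed $x$ says nothing about continuity in $x$, and the paper never assumes continuity (let alone joint continuity) in the $x$ variable. What the paper does assume is that $(x,\theta)\mapsto q(x,\theta)$ is \emph{lower semicontinuous}; hence $x\mapsto q(x,\theta^\ast)$ is lower semicontinuous on the compact set $K$, attains its infimum there, and that infimum is positive because $q>0$ everywhere. This gives you $c:=\inf_{x\in K}q(x,\theta^\ast)>0$, which is all your argument needs; it is exactly the device the paper uses to prove $m_{\eps,K}>0$ (there applied on the compact $K\times\bar{\mathcal B}(\theta^\ast,\eps)$). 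With that substitution your proof is complete and correct.
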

\begin{proof}
  Denote by $m_{\eps,K}$ the infimum of $m_\eps(x)$ on $K$, where
  $m_\eps(\cdot)$ is the function defined in~\eqref{eq:mesp}.
  Actually, $m_{\eps,K}$ is the infimum of the lower semicontinuous function $(x,\theta)
  \mapsto q(x,\theta)$ on the compact set $K\times \bar{\mathcal
    B}(\theta^\ast,\eps)$. Since a lower semicontinuous function
  attains its lower bound on any compact set, and $q(x,\theta)>0$ for
  all $x$ and $\theta$, the infimum $m_{\eps, K}$ is positive.

  Now fix a point of the probability space in the event
  $\Big\{\widehat\theta_t\to\theta^\ast \Big\}$. There, there
  exists some $t_\eps$ such that, for all $t>t_\eps$,
  $\|\widehat\theta_t - \theta^\ast\|<\eps$. Hence, for all
  $T>t_\eps$, and all $x\in \mathscr X$,
  \[
  D_T(x)\ge \frac 1{\Omega_T}\sum_{k=t_\eps+1}^T N_k q(x,\theta_k)
  \ge \frac{\Omega_T-\Omega_\eps}{\Omega_T}m_\eps(x)
  \quad
  \text{where }\Omega_\eps=\sum_{k=1}^{t_\eps}N_k.
  \]
  Therefore  
  \begin{align}
    \left| \frac{q(x,\theta^\ast)}{D_T(x)}-1
    \right| & \le \frac{\Omega_T}{(\Omega_T-\Omega_\eps)m_\eps(x)} 
    \left| q(x,\theta^\ast) - {D_T(x)}
    \right|\nonumber
    \\
    & \le \frac{\Omega_T}{(\Omega_T-\Omega_\eps)m_{\eps,K}}\Omega_T^{-1}\sum_{t = 1}^T N_k
    \left\| q(\cdot,\theta^\ast) - {q(\cdot,\widehat\theta_t)}
    \right\|_{K,\infty}\label{eq:DolceGabana}.
  \end{align}
  The bound in \eqref{eq:DolceGabana} is uniform on $K$ and goes to
  $0$ using Lemma~\ref{lem:cesaro}, which leads to the desired result.
\end{proof}


We can now state and prove the result controlling the difference between the AMIS
estimator and the auxiliary variable $\widehat{\Pi}_T^\ast(\psi)$.
\begin{pro} \label{pro:AMIS} %
  Assume that $h\in \mathscr G^2(\mathbb R^d)$ and $\sum_t
  1/N_t<\infty$. Moreover, assume that, for some $\eps>0$,
  $\psi(\cdot) q(\cdot,\theta^\ast)/m_{\eps}(\cdot)$ is in $\mathscr
  H^2(\mathbb R)$. Then
  \begin{equation*}
    \lim_{T \to +\infty}\widehat{\Pi}_T^{\text{AMIS}}(\psi)
    - \widehat{\Pi}_T^\ast(\psi) = 0 \quad \text{almost surely}.  
  \end{equation*}
\end{pro}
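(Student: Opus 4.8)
The plan is to write the difference as a single importance-weighted sum and to reduce everything to the ratio $q(\cdot,\theta^\ast)/D_T(\cdot)$ studied in Lemma~\ref{lem:D}. Factoring out the auxiliary weight $\pi/q(\cdot,\theta^\ast)$ gives
\[
\widehat{\Pi}_T^{\text{AMIS}}(\psi)-\widehat{\Pi}_T^\ast(\psi)
=\frac{1}{\Omega_T}\sum_{t=1}^T\sum_{i=1}^{N_t}
\frac{\pi(X_i^t)\,\psi(X_i^t)}{q(X_i^t,\theta^\ast)}
\left(\frac{q(X_i^t,\theta^\ast)}{D_T(X_i^t)}-1\right),
\]
so it suffices to bound the absolute value of the right-hand side. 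By Theorem~\ref{thm:strong} the event $\big\{\widehat\theta_t\to\theta^\ast\big\}$ has probability one, and I would work on it throughout, since it is exactly there that Lemma~\ref{lem:D} furnishes uniform control of $q(\cdot,\theta^\ast)/D_T(\cdot)-1$ on compact sets. The whole argument is then organised around a compact/tail splitting of $\mathscr X$ that exploits the reinforced hypothesis on $\psi$.

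Fix $\delta>0$. First I would isolate the contribution of the particles lying in a compact $K\subset\mathscr X$. There the factor is at most $\varepsilon_T:=\|q(\cdot,\theta^\ast)/D_T(\cdot)-1\|_{K,\infty}$, so this part is bounded by $\varepsilon_T\,\widehat{\Pi}_T^\ast(|\psi|)$. Lemma~\ref{lem:D} gives $\varepsilon_T\to0$, while Proposition~\ref{pro:auxiliary} applied to $|\psi|$—which lies in $\mathscr H^2(\mathbb R)$, being dominated by $\psi\,q(\cdot,\theta^\ast)/m_\eps$—shows $\widehat{\Pi}_T^\ast(|\psi|)\to\Pi(|\psi|)$. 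Hence this compact part vanishes as $T\to\infty$.

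For the complementary particles $X_i^t\notin K$ I would use the crude lower bound from the proof of Lemma~\ref{lem:D}: on $\big\{\widehat\theta_t\to\theta^\ast\big\}$ one has $D_T(x)\ge \big((\Omega_T-\Omega_\eps)/\Omega_T\big)\,m_\eps(x)$ for $T$ large, so that, once $\Omega_T/(\Omega_T-\Omega_\eps)\le 2$, one gets $|q(x,\theta^\ast)/D_T(x)-1|\le 3\,q(x,\theta^\ast)/m_\eps(x)$ (using also $q(\cdot,\theta^\ast)\ge m_\eps$). Writing $\widetilde\psi:=|\psi|\,q(\cdot,\theta^\ast)/m_\eps$, the tail part is then at most $3\,\widehat{\Pi}_T^\ast\big(\widetilde\psi\,\mathbf 1_{K^c}\big)$. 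Since $\widetilde\psi\in\mathscr H^2(\mathbb R)$ by hypothesis and $\widetilde\psi\,\mathbf 1_{K^c}$ is dominated by it, Proposition~\ref{pro:auxiliary} gives $\widehat{\Pi}_T^\ast(\widetilde\psi\,\mathbf 1_{K^c})\to\int_{K^c}\widetilde\psi\,\pi\,\dd x$, and as $\widetilde\psi\in L^1(\Pi)$ this integral is made smaller than $\delta/3$ by choosing $K$ large. Combining the two parts yields $\limsup_T|\widehat{\Pi}_T^{\text{AMIS}}(\psi)-\widehat{\Pi}_T^\ast(\psi)|\le\delta$ on a probability-one event.

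To turn this into a single almost sure statement I would run the argument along a countable sequence $\delta_n\downarrow0$ with associated compacts $K_n$, intersecting the corresponding probability-one events (those of Theorem~\ref{thm:strong}, of Lemma~\ref{lem:D} on each $K_n$, and of Proposition~\ref{pro:auxiliary} for $|\psi|$ and each $\widetilde\psi\,\mathbf 1_{K_n^c}$). On the resulting probability-one event the limsup lies below every $\delta_n$, hence is zero. The main obstacle is precisely this tail estimate: Lemma~\ref{lem:D} says nothing off compact sets, so the ratio $q(\cdot,\theta^\ast)/D_T(\cdot)$ could a priori be large there, and it is exactly the strengthened assumption that $\psi\,q(\cdot,\theta^\ast)/m_\eps$—rather than merely $\psi$—belongs to $\mathscr H^2(\mathbb R)$ that provides the uniform-in-$T$ domination needed to squeeze the tail through Proposition~\ref{pro:auxiliary}.
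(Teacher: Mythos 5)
Your proof is correct and follows essentially the same route as the paper's: the same compact/tail splitting of $\mathscr X$, Lemma~\ref{lem:D} to kill the contribution of particles in the compact, the lower bound $D_T(x)\ge\big((\Omega_T-\Omega_\eps)/\Omega_T\big)m_\eps(x)$ together with the strengthened hypothesis on $\psi\,q(\cdot,\theta^\ast)/m_\eps$ and Proposition~\ref{pro:auxiliary} to squeeze the tail. The only differences are presentational (you factor the difference pointwise through $q(\cdot,\theta^\ast)/D_T(\cdot)-1$ rather than splitting $\psi=\psi_1+\psi_2$ by a triangle inequality, and you make the final countable intersection over $\delta_n\downarrow 0$ explicit, which the paper leaves implicit in its ``$\alpha$ arbitrary'' step).
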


\begin{proof} Fix $\alpha>0$.
  The integral
  \[
  \int\Big|\psi(x)\Big| \frac{q(x,\theta^\ast)}{m_\eps(x)} \pi(x)\dd x
  \]
  is finite because $|\psi(\cdot)| q^\ast(\cdot) / m_\eps(\cdot) \in
  \mathscr H^2(\mathbb R)$.  Therefore we can find some compact subset
  $K$ of $\mathscr X$ such that
  \[
  \int_{\mathscr X\setminus K}\Big|\psi(x)\Big|
  \frac{q(x,\theta^\ast)}{m_\eps(x)} \pi(x)\dd x < \alpha.
  \]
  Now, set $\psi_1(x) := \psi(x)\mathbf 1\{x\in K\}$, 
  $\psi_2(x) := \psi(x)\mathbf 1\{x \not\in K\}$ so that
  $\psi(x)=\psi_1(x)+\psi_2(x)$. And consider the event 
  \[
  E=\Big\{ \widehat\theta_t\to\theta^\ast \Big\} \cap 
  \Big\{ \piast_T(|\psi_1|)\to \Pi(|\psi_1|) \Big\} 
  \cap 
  \Big\{ \piast_T(|\psi_2|)\to \Pi(|\psi_2|) \Big\}
  \cap 
  \left\{ \piast_T\left( \varphi \right)
    \to \Pi\left( \varphi\right)
    \right\}.
  \]
  where $\varphi(x):= |\psi_2(x)| {q(x,\theta^\ast)} \Big/ {m_\eps(x)}$.
  With Theorem~\ref{thm:strong} and Proposition~\ref{pro:auxiliary}, this event
  is of probability $1$. Moreover, note that, because of
  \eqref{eq:mesp}, $q(x,\theta^\ast)/m_\eps(x)\ge 1$ and thus
  \begin{equation}
    \label{eq:alpha}
    \Pi(|\psi_2|)\le\Pi(\varphi) = \int_{\mathscr X \setminus K}|\psi_2(x)|
    \frac{q(x,\theta^\ast)}{m_\eps(x)}\pi(x)\dd x <\alpha.
  \end{equation}

  Then, using
  linearity of the operators $\piamis_T$ and $\piast_T$, we have
  \begin{align}
    \left| \piamis_T(\psi)-\piast_T(\psi) \right| & \le
    \left| \piamis_T(\psi_1)-\piast_T(\psi_1) \right| +
    \left| \piamis_T(\psi_2)-\piast_T(\psi_2) \right| \notag
    \\
    & \le
    \left| \piamis_T(\psi_1)-\piast_T(\psi_1) \right| +
    \piamis_T(|\psi_2|) + \piast_T(|\psi_2|).
    \label{eq:3bounds}
  \end{align}
 The first term in the right hand side of \eqref{eq:3bounds} can be
 controlled as follow:
  \begin{align*}
   \Delta_T:= \left| \piamis_T(\psi_1)-\piast_T(\psi_1) \right|  
    & \le \frac{1}{\Omega_T}\sum_{t=1}^T\sum_{i=1}^{N_t} \frac{
      \pi(X^t_i) \Big|\psi_1(X^t_i)\Big|}{q(X^t_i,\theta^\ast)} 
    \left\| \frac{q(\cdot,\theta^\ast)}{D_T(\cdot)} - 1 \right\|_{K,\infty}
    \\
    & \le \left\| \frac{q(\cdot,\theta^\ast)}{D_T(\cdot)} - 1 \right\|_{K,\infty} \piast_T(|\psi_1|).
  \end{align*}
  On the event $E$, using Lemma~\ref{lem:D}, the first term of the
  last bound goes to $0$, and $\piast_T(|\psi_1|)\to\Pi(|\psi_1|)$.
  Hence $\lim_T \Delta_T = 0$ on $E$.

  On the event $E$, the second term of the right hand side of
  \eqref{eq:3bounds} can be bounded by
  \begin{align}
    \piamis_T(|\psi_2|) & \le \frac 1{\Omega_T}
    \sum_{t=1}^T\sum_{i=1}^{N_t}
    \frac{\pi(X^t_i) |\psi_2(X^t_i)|}{q(X^t_i,\theta^\ast)}\
    \frac{q(X^t_i, \theta^\ast)}{D_T(X^t_i)} \notag
    \\
    &\le \frac{\Omega_T}{\Omega_T-\Omega_\eps}
    \piast_T\left( |\psi_2(\cdot)|
      \frac{q(\cdot,\theta^\ast)}{m_\eps(\cdot)}\right)
    =\frac{\Omega_T}{\Omega_T-\Omega_\eps}\piast_T(\varphi)
    \label{eq:2nd}
  \end{align}
  using the fact that, on $E$, there
  exists a $t_\eps$ such that, for all $t>t_\eps$,
  $\|\widehat\theta_t - \theta^\ast\|<\eps$. Hence, for all
  $T>t_\eps$, and all $x\in \mathscr X$,
  \[
  D_T(x)\ge \frac 1{\Omega_T}\sum_{k=t_\eps+1}^T N_k q(x,\theta_k)
  \ge \frac{\Omega_T-\Omega_\eps}{\Omega_T}m_\eps(x)
  \quad
  \text{where }\Omega_\eps=\sum_{k=1}^{t_\eps}N_k.
  \]
  On the event $E$, $\piast_t(\varphi)$ converges to $\Pi(\varphi)$
  which is smaller than $\alpha$ because of
  \eqref{eq:alpha}. Moreover, $(\Omega_T - \Omega_\eps) / \Omega_T \to
  1$. Hence, on the event $E$,
  \[
  \limsup_T \piamis_T(|\psi_2|) \le \alpha
  \]
  
  And, finally, on the event $E$, the third term of the right hand
  side of \eqref{eq:3bounds} converges to $\Pi(|\psi_2|)$ which is
  smaller than $\alpha$ using \eqref{eq:alpha}. Hence, on $E$,
  \[
  \limsup_T \piast_T(|\psi_2|) \le \alpha
  \]

  Reporting in \eqref{eq:3bounds}, we
  obtain that, on the event $E$ of probability $1$,
  \[
  \limsup_T \left| \piamis_T(\psi)-\piast_T(\psi) \right|
  \le 2 \alpha.
  \]
  Because $\alpha$ is arbitrary small, we have proven the desired result.
\end{proof}

\subsection{Conclusion  of the proof of Theorem~\ref{thm:Consistency}}
Proposition~\ref{pro:auxiliary} gives the convergence of the auxiliary
variable $\widehat{\Pi}^\ast_T(\psi)$ towards the integral $\Pi(\psi)$ almost
surely, while the discrepency between the AMIS estimator and this
auxiliary variable becomes negligeable almost surely
(Propositon~\ref{pro:AMIS}).  Whence the
almost sure convergence of the estimator 
$\widehat{\Pi}^{\text{AMIS}}_T(\psi)$  to the integral $\Pi(\psi)$, and then 
the proof of Theorem~\ref{thm:Consistency} is completed.
\hfill $\qed$

\section{Numerical experiments}
\label{sec:numerical}

We provide here a detailed numerical example on which we have compared
\begin{enumerate}
\item[(a)] an iterative and adaptive algorithm learning $\theta$ with a
  naive recycling strategy at the end;
\item[(b)] the original AMIS of \citet{cornuet:2012} and
\item[(c)] our modified AMIS with its recycling step only at the end.
\end{enumerate}
We refer the reader to \citet{cornuet:2012} for the orginal
algorithm~(b). The algorithm~(c) is the main topic of this paper, and is
described with great details in Section~\ref{sec:algorithms}. At
last, algorithm~(a) follows the same code lines, but stops at
line~\ref{line:end.learn} and returns a merging of all past samples 
without updating the weights computed at line~\ref{line:simple.weight}.
In this example, the target is the posterior distribution
when conducting a Bayesian analysis on a population genetic data set.
It turns out that our algorithm was the most powerful
for a given computational cost, that is to say, for a given number of
simulations from proposals.

\begin{figure}[htb]
  \includegraphics[width = 1\textwidth, height=.6\textwidth]{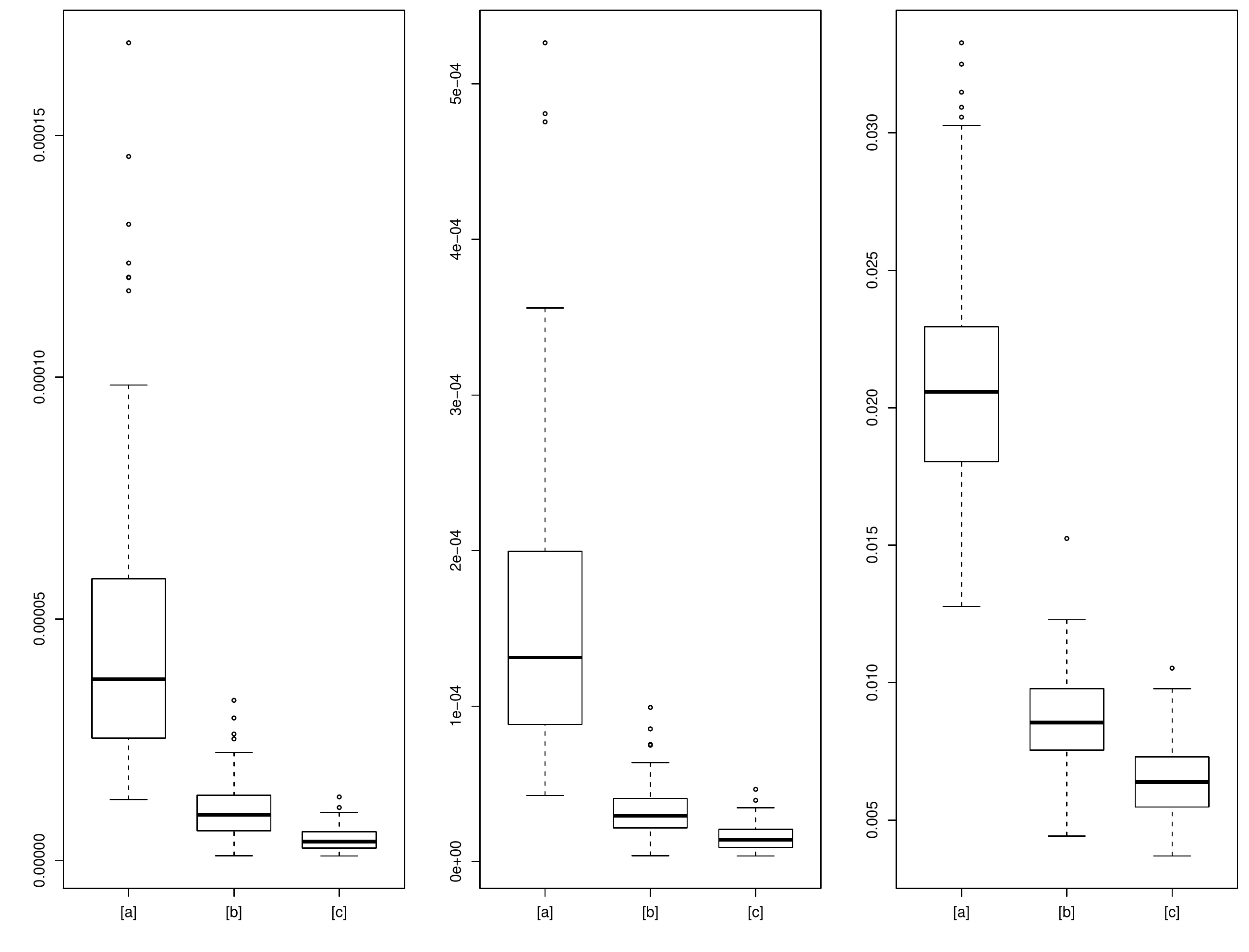}
  \caption{\small Comparison of the three algorithms (a), (b) and (c)
    over $100$ replicates based on distances between the discrete
    probability function given by the output and the target in the
    population genetic example. \textit{Left:} Cramer-von-Mises
    distances. \textit{Middle:} $\mathbb
    L^2$-distances. \textit{Right:} $\mathbb L^\infty$-distances.}
\label{fig:GenPop}
\end{figure}

\paragraph{A Bayesian model in population genetics.}
This example comes from a population genetic problem. More
precisely, we want to conduct a Bayesian analysis of a genetic data set
$\mathscr D$ to infer mutation and migration rates in a parametric
model. Assume that the species of interest is composed of two large
populations at equilibrium, one on an island and the other one on the
mainland. The parametric model we have used is coalescent based, and
is detailed, for instance in \citet{donnelly:tavare:1995} and
\citet{rousset:2012}. The genetic data come from two samples of
individuals corresponding respectively to the two populations,
genotyped at five independent microsatellite loci.

The model is composed of two populations, whose effective population
sizes are both equal to 10000. We restrict the demographic scenario of
this model to a symmetric migration between the two populations, and
the migration rates are the supposed to be the same in both
directions. We consider the mutation model SMM (\textit{Single
  Mutation Model}). Our data set which we denote by $\mathscr D$
is simulated on five independent loci. At each locus we simulate the
genotypes of individuals, using the software \textit{IBDSim} of
\citet{leblois:2009}. For this data set, we set the mutation rate
$x_{\text{mut}}$ to $2.3$ and the symmetric migration rate
$x_{\text{mig}}$ to $0.04$. In this example, the
likelihood of a data set $\ell\left(\mathscr D | x_{\text{mut}},
  x_{\text{mig}}\right)$ is the product of five integrals.  Each
integral represents the likelihood of the data set at a given
locus. In this study, we approximate these integrals through
importance sampling methods. These approximations are provided by the
software \textit{Migraine} of \citet{rousset:2012}.

We consider a uniform prior on the set $\mathscr X = (10^{-1}, 10) \times
(10^{-3}, 0.5)$, thus simplifying the expression of the posterior
density to
\begin{equation}
  \label{eq:posterior:GenPop}
  \pi\big(x_{\text{mut}}, x_{\text{mig}} | \mathscr
  D\big) \propto \ell\big(\mathscr D
  | x_{\text{mut}}, x_{\text{mig}}\big) \mathbf 1_{\mathscr X}\big(x_{\text{mut}}, x_{\text{mig}}\big).
\end{equation}

Hence the target $\pi(x)$ is the posterior distribution on a two dimensional
parameter $x=(x_\text{mig}, x_\text{mut})$ when the prior is a non informative
uniform distribution on some  set $\mathscr X$. This example is actually
typical of situations where the density of the target $\pi(x)$ is of high
computational cost. With coalescent based models, the likelihood, thus the
posterior density $\pi(x)$, is an integral over a latent process that, by
chance, is computed via importance sampling too, see \citet{deIorio:2005} and
\citet{rousset:2012}.

\paragraph{Tuning of the algorithms.}
The family of proposals in the three sequential algorithms is composed
of bivariate Gaussian distributions, conditioned (or truncated) on the
support $\mathscr X$ of the prior distribution. The parameter of the proposal is a
four dimensional vector $\theta=(\mu_\text{mig}, \mu_\text{mut},
\sigma^2_\text{mig}, \sigma^2_\text{mut})$,  whose  first two coordinates
give the position of the mode and last two coordinates give the
marginal variances of the diagonal covariance matrix.
For each realization 
of a scheme, we set $T = 45$ and $N_t = N \times t$ where $N = 100$.

\paragraph{Results.}
Performances of the sampling algorithms were compared as follows. Instead of
looking at the estimates of $\Pi(\psi)$ for various integrands $\psi$, we
decided here to evaluate the outputs with distances between the discrete
measure induced by the weighted final samples and the target. In
Figure~\ref{fig:GenPop}, we have represented the Cramer-von Mises, the
$\mathbb L^2$- and the $\mathbb L^\infty$- distances between the empirical
distribution function $\widehat{F}_T(x)$ of the final weighted samples and the
distribution function $F(x)$ of the target, \textit{i.e.}, the posterior
distribution. Furthermore, since the density of the target cannot be written
with a close formula in the concrete example, we shall describe how $F(x)$ was
computed. Following an idea of R. Leblois and F. Rousset, estimates of
$\pi(x)$ were computed for values of $x$ ranging a regular $500\times 500$
grid of the support of the prior distribution. The estimation error was then
decreased using a kriging model on $\pi(x)$, assuming regularity conditions of
that posterior density. Of course, this sharp approximation comes at a much
higher computational cost than any run of the Monte Carlo algorithms we
compare here.

\bigskip

The results presented in Figure~\ref{fig:GenPop} exhibit a clear
advantage to our modified AMIS, namely (c), in front of (a),
the sequential scheme with a naive recycling and (b), the original
AMIS, whatever the distance. Thus, modifying the original AMIS was not
only a way to obtain the theoretical results of
Section~\ref{sec:results}, but also a real improvement of the original
algorithm. One of the reason that might explain this phenomenon is
that the recycling scheme of the original AMIS introduces a bias on
$\theta$ during the learning process which tends to accumulate, and
thus is large enough to degrade the output quality when compared to
our modified AMIS.

\section{Conclusion and discussion}
\label{sec:discussion}
For a certain class of functions, we derived strong consistency of
our modified AMIS. We proved a strong law of large numbers for a
large class of integrands characterized by regularity conditions and for a
general family of proposals. We assumed that the size of the samples
at each stage, namely $N_t$, tends to infinity rather quickly so that
$\sum_t 1/N_t$ is finite. This condition might be unsatisfactory but is due
to the fact that we only assumed that $\pi(X)\|h(X)\|/q(X,\theta)$ has
a finite quadratic moment when $X\sim Q(\theta)$.  In future research, we could try to
relaxe the hypotheses on $N_t$, assuming that the above random
variable has exponential moments and using a large deviation
inequalities instead of the Chebyshev bound.
%

Besides, another route might be taken to prove theoretical
results on the modified AMIS, based on Markovian arguments. Indeed,
when the sample size does not vary between iterations during the
learning process, \textit{i.e.}, when $N_1=N_2=\ldots=N_T=N$, the
sequence of pairs $(X^t_{1:N}, \widehat\theta_t)$ form a Markov
chain. And the final sum in \eqref{eq:amis:estimator} might be traited
using results on averages over a path of a Markov chain. But we have
left this route for future works.

The present paper does not conduct a comprehensive numerical
comparison between the original AMIS and our algorithm, because we
focused here on theoretical results. But the numerical experiment
presented above is a serious example exhibiting an advantage to our
modification of the AMIS. 

Finally two important, methodological issues have not been tackle in
this theoritical work. 
The first one deals with the initialization of the AMIS. The original
paper of \citet{cornuet:2012} proposed an answer based on a logistic
sample when nothing is known on the target. We stress here that the
starting distribution is of great practical consequence: for instance,
if the first sample misses a mode of the target distribution, we have
almost no chance to see it during the whole process. That was summed
up  by \citet{cornuet:2012} as the ``what-you-get-is-what-you-see''
nature of the AMIS.
Likewise, a recurring numerical question on the AMIS concerns the
allocation of the overall computational cost (given by the final system size
$\Omega_{T}$).  To optimize allocation, one could propose and study
allocation strategies on different iterations via the sequence $N_1,\ldots,
N_T$. We also believe that the winner in the competition between the
original and the modified AMIS depends also on this allocation strategy

\appendix

\section*{Appendix -- Weak law of large numbers on triangular arrays }
\label{AnnexeTabTriang}

The following result is a multidimensional generalisation of the weak
law of large numbers given in Chapter 9 of \cite{cappe:book}.  In the
following, all random variables are assumed to be defined on a joint
probability space $\big(\Xi, \mathcal F, \mathbb P \big)$ and
$\big\{N_t\big\}_{t \ge 1}$ denotes an increasing sequence of
integers.  In those theorem as well as throughout this paper, we used
the notion of tightness of random variables that we recall here. (See
\citet{Billingsley}, p. 336 for more details)

\begin{defi}
  A sequence of random vectors $\{U_n\}$ is tight if
  \begin{equation*}
    \lim_{\eta \to \infty}\ \sup_{n\ge 1} \prob \Big(\big\|U_n\big\| \ge \eta \Big) = 0.
  \end{equation*}
\end{defi}

\bigskip

We have the following law of large numbers.
\begin{theo}
  \label{thm:Cappe:1}
  Let $\big\{V_{t,i} \big\}_{1 \le i \le N_t}$ be a triangular array
  of random vectors on $\R^d$, and let $\big\{ \mathcal F_t \big\}_{t \ge
    1}$ be a sequence of $\sigma$-fields. Assume that the following
  conditions hold true.
  \begin{enumerate}
  \item[\it (i)] The $V_{t,i}$ for $i=1,\ldots, N_t$ are conditionally independent given
    $\mathcal F_t$,  and for any $t$ and $i = 1, \ldots, N_t$,
    $\esp \bigg[\big\|V_{t,i}\big\| \Big|
    \mathcal F_t\bigg] < +\infty$.
  \item[\it (ii)] The sequence $\bigg\{ \sum_{i=1}^{N_t} \esp\Big[\big\|V_{t,i}\big\| 
    \Big| \mathcal F_t\Big]\bigg\}_{t \ge 1}$ is tight.
  \item[\it (iii)] For any positive $\eta$,
    \begin{equation*}
      \lim_{t\to+\infty}\sum_{i=1}^{N_t} \esp\bigg[\big\|V_{t,i}\big\|\mathbf 
      1\bigg\{\big\|V_{t,i}\big\| > \eta \bigg\} \Big| \mathcal F_t \bigg] 
      = 0 \quad \text{in probability.}
    \end{equation*}
  \end{enumerate}
  Then,
  \begin{equation*}
    \lim_{t\to\infty}\sum_{i=1}^{N_t} \bigg\{V_{t,i} - \esp\Big[V_{t,i} 
      \Big| \mathcal F_t \Big]\bigg\}
    =0\quad \text{in probability.}
  \end{equation*}
\end{theo}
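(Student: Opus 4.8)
The plan is to run a truncation argument, the standard route to weak laws for triangular arrays. I would fix a truncation level $\eta>0$ and split each variable as $V_{t,i}=W_{t,i}+U_{t,i}$, where $W_{t,i}=V_{t,i}\mathbf 1\{\|V_{t,i}\|\le\eta\}$ is the truncated part and $U_{t,i}=V_{t,i}\mathbf 1\{\|V_{t,i}\|>\eta\}$ is the tail. Writing the centered sum $\sum_i(V_{t,i}-\esp[V_{t,i}\mid\mathcal F_t])$ as the sum of the centered $W_{t,i}$ plus the centered $U_{t,i}$, I would control the two pieces by entirely different means: the tail via condition \textit{(iii)}, and the truncated part via a conditional second-moment estimate that exploits conditional independence \textit{(i)} together with tightness \textit{(ii)}.

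For the tail piece, note that $\|\sum_i(U_{t,i}-\esp[U_{t,i}\mid\mathcal F_t])\|\le\sum_i\|U_{t,i}\|+\sum_i\esp[\|U_{t,i}\|\mid\mathcal F_t]$, and that $\sum_i\esp[\|U_{t,i}\|\mid\mathcal F_t]$ is exactly the quantity in \textit{(iii)}, hence tends to $0$ in probability for each fixed $\eta$. The first sum $\sum_i\|U_{t,i}\|$ has the same conditional expectation, so a conditional Markov inequality followed by dominated convergence (the relevant conditional probabilities are bounded by $1$) shows it too tends to $0$ in probability. Thus, for every fixed $\eta$, the whole tail piece vanishes in probability.

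For the truncated piece, conditional independence makes the cross terms disappear, so the conditional second moment of $\sum_i(W_{t,i}-\esp[W_{t,i}\mid\mathcal F_t])$ equals $\sum_i\esp[\|W_{t,i}-\esp[W_{t,i}\mid\mathcal F_t]\|^2\mid\mathcal F_t]\le\sum_i\esp[\|W_{t,i}\|^2\mid\mathcal F_t]$. Since $\|W_{t,i}\|\le\eta$ and $\|W_{t,i}\|\le\|V_{t,i}\|$, each term is bounded by $\eta\,\esp[\|V_{t,i}\|\mid\mathcal F_t]$, so the total conditional variance is at most $\eta\,G_t$ with $G_t:=\sum_i\esp[\|V_{t,i}\|\mid\mathcal F_t]$. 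A conditional Chebyshev inequality then gives $\prob(\|\sum_i(W_{t,i}-\esp[W_{t,i}\mid\mathcal F_t])\|>\delta\mid\mathcal F_t)\le\eta G_t/\delta^2$. Here I would invoke tightness of $\{G_t\}$ from \textit{(ii)}: fixing $\epsilon>0$, choose $M$ with $\sup_t\prob(G_t>M)\le\epsilon$; splitting according to $\{G_t\le M\}$ and taking expectations yields $\limsup_t\prob(\|\text{truncated}\|>\delta)\le\eta M/\delta^2+\epsilon$.

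To conclude, I would combine the two pieces via $\prob(\|\text{full sum}\|>2\delta)\le\prob(\|\text{truncated}\|>\delta)+\prob(\|\text{tail}\|>\delta)$. The left-hand $\limsup_t$ does not depend on $\eta$, yet it is bounded by $\eta M/\delta^2+\epsilon$ for every $\eta>0$; letting $\eta\to0$ leaves $\limsup_t\prob(\|\text{full sum}\|>2\delta)\le\epsilon$, and since $\epsilon$ and $\delta$ are arbitrary, the sum converges to $0$ in probability. The main obstacle, and the only genuinely delicate point, is this interplay of limits: the level $\eta$ must be large to kill the tail but small to kill the truncated variance, and the two cannot be optimised at once. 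The resolution is precisely the observation that the $\limsup$ of the full probability is a single $\eta$-independent quantity, so one first fixes $\epsilon$ (hence $M$) via tightness, uses condition \textit{(iii)} to discard the tail at each fixed $\eta$, and only then sends $\eta\to0$.
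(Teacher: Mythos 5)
Your proof is correct. A point of comparison worth knowing: the paper does not actually prove Theorem~\ref{thm:Cappe:1} at all --- it is stated in the appendix and attributed, without proof, to Chapter~9 of \cite{cappe:book} as a multidimensional generalisation of the weak law of large numbers there. Your truncation argument is essentially the standard proof of that cited scalar result, together with the (genuinely needed) vector-valued adjustments, so you have in effect reconstructed the proof the paper outsources. The two delicate points are both handled soundly: first, the conditional variance computation for the truncated part, where conditional independence kills the cross terms coordinatewise, $\esp\big[\|Z-\esp[Z\mid\mathcal F_t]\|^2\mid\mathcal F_t\big]\le\esp\big[\|Z\|^2\mid\mathcal F_t\big]$ for the Euclidean norm, and $\|W_{t,i}\|^2\le\eta\|V_{t,i}\|$ ties the truncated second moment to the first-moment quantity $G_t$ controlled by tightness in \textit{(ii)}; second, the order of limits, since the threshold $M$ produced by tightness depends only on $\epsilon$ and not on $\eta$, so bounding the $\eta$-free quantity $\limsup_t\prob(\|\text{full sum}\|>2\delta)$ by $\eta M/\delta^2+\epsilon$ for every $\eta$ and then letting $\eta\to0$ is legitimate. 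Your treatment of the un-centered tail term $\sum_i\|U_{t,i}\|$ --- conditional Markov followed by bounded convergence applied to $\min(1,T_t/\delta)$, which is dominated by $1$ and tends to $0$ in probability --- is exactly the right device for converting the in-probability hypothesis \textit{(iii)} into an unconditional statement, and is the only step where care is needed to avoid implicitly assuming \textit{(iii)} holds almost surely.
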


\section*{Acknowledgments} This research was financially supported
by the French ``\textit{A\-gen\-ce Nationale de la Recherche}''
through the ANR project EMILE (ANR-09-BLAN-0145-04). All three authors
thank Christian P. Robert for useful discussions on the AMIS. The
first two author were also financially supported by the \textit{Labex}
NUMEV (\textit{Solutions Num\'eriques,Mat\'erielles et Mod\'elisation
  pour l'Envi\-ron\-nement et le Vivant}). The
last author is grateful to Rapha\"el Leblois for his help in
population genetics.


\bibliographystyle{apalike}
\bibliography{AMISBib}
\end{document}